\newcommand{\fref}[1]{Fig.~\ref{#1}}
\newcommand{\tref}[1]{Table~\ref{#1}}
\newcommand{\sref}[1]{Section~\ref{#1}}
\newenvironment{proced}[1][!htbp]
  {
   \begin{algorithm}[#1]%
  }{\end{algorithm}}
\providecommand{\U}[1]{\protect\rule{.1in}{.1in}}
\newtheorem{theorem}{Theorem}
\newtheorem{problem}{Problem}
\newenvironment{proof}[1][Proof]{\textbf{#1.} }{\ \rule{0.5em}{0.5em}}
\begin{document}

\title{Dense Quantum Measurement Theory}
\author{Laszlo Gyongyosi\thanks{School of Electronics and Computer Science, University of Southampton, Southampton SO17 1BJ, U.K., and Department of Networked Systems and Services, Budapest University of Technology and Economics, 1117 Budapest, Hungary, and MTA-BME Information Systems Research Group, Hungarian Academy of Sciences, 1051 Budapest, Hungary.}
\and Sandor Imre\thanks{Department of Networked Systems and Services, Budapest University of Technology and Economics, 1117 Budapest, Hungary.}}

\date{}

\maketitle
\begin{abstract}
Quantum measurement is a fundamental cornerstone of experimental quantum computations. The main issues in current quantum measurement strategies are the high number of measurement rounds to determine a global optimal measurement output and the low success probability of finding a global optimal measurement output. Each measurement round requires preparing the quantum system and applying quantum operations and measurements with high-precision control in the physical layer. These issues result in extremely high-cost measurements with a low probability of success at the end of the measurement rounds. Here, we define a novel measurement for quantum computations called dense quantum measurement. The dense measurement strategy aims at fixing the main drawbacks of standard quantum measurements by achieving a significant reduction in the number of necessary measurement rounds and by radically improving the success probabilities of finding global optimal outputs. We provide application scenarios for quantum circuits with arbitrary unitary sequences, and prove that dense measurement theory provides an experimentally implementable solution for gate-model quantum computer architectures.
\end{abstract}

\section{Introduction}
\label{sec1}
Quantum measurement is a crucial subject in quantum computation and communication \cite{ref1,ref2,ref3,ref4,ref5,ref6,ref7,ref8,ref9,ref10,ref11,ref12,ref13,ref14,ref15,ref16,ref17,ref18,ref19,ref24,ref26, refibm,refpr,refha,aar,ref28,ref29,ref30,ref31,ref32,ref33}. The aim of quantum measurement is to extract valuable and useable information from the measured quantum system. The measurement operator connects the quantum world and our traditional, classical world. While the input of the measurement can be a superposed or entangled quantum system, the output of the measurement is classical information (i.e., bitstrings). Quantum measurements can be performed in different ways, for example via projective \cite{ m1,m2,m3,m3b,m4,m5,m6,m7} or POVM (positive-operator valued measure) measurements \cite{ref25,ref27, m3b,p1,p2,p3,p4,p5}.

Quantum measurement is required element in high-complexity quantum computations, in high-performance quantum information processing and in quantum computer architectures. The main issues of current quantum measurement strategies are the high number of measurement rounds and the probability of successfully finding a global optimal measurement output. The necessity of a high number of measurement rounds requires preparing the input quantum system and applying quantum operations with high-precision control in the physical layer through several rounds, which results in a high-cost procedure overall that is not tractable in any experimental setting. The repetition of a measurement round therefore requires in each round the careful preparation of a quantum register of quantum states that are then fed into a quantum circuit that realizes an arbitrary unitary sequence. In each round, the output of the quantum circuit is measured by a measurement array $M$, which produces a classical output string $z$. The aim is then to find a global optimal output $z^{*} $ that describes the properties of the output quantum system with the highest accuracy according to quality measurement functions. An example of a high-cost application of standard measurement is measuring the output of a quantum circuit applied to realize quantum computations where the quantum circuit is set to perform a unitary operation $U$. Without loss of generality, the $n$-length input quantum system ${\left| X \right\rangle} $ of the quantum circuit is assumed to be a superposed quantum system that is fed into the circuit. Then, the $n$-length output quantum system ${\left| Y \right\rangle} =U{\left| X \right\rangle} $ is measured by the measurement operator $M$, which produces a string $z$ and, after repeating the procedure $R_{0} $ times, yields the global optimal string $z^{*} $ with success probability $\Pr _{R_{0} } \left(z^{*} \right)$. Assuming that $U$ is an arbitrary quantum circuit and $M$ is a standard measurement, the measurement procedure requires high repetition numbers, while the success probability remains low (An example is the application of standard quantum measurements in quantum computers, where for $R_{0} \approx 100$ standard measurement rounds, the achievable success probability is approximately $\Pr _{R_{0} } \left(z^{*} \right)\ge 0.01$ \cite{ref10}). Since each measurement round requires high-cost and high-precision quantum state preparations and quantum operations, the total cost to find the global optimal $z^{*} $ is very high in a practical setting. To avoid the issues of a high number of measurement rounds and the low success probability of quantum measurements, a novel measurement is essential for quantum computations.  

Here, we define a novel measurement for quantum computations called dense quantum measurement. The dense measurement strategy aims at fixing the drawbacks of standard quantum measurements by achieving a radical reduction in the number of necessary measurement rounds and by significantly improving the success probabilities of finding global optimal outputs (see Theorem 1 for the system model). Dense quantum measurement requires only $R{\rm \ll }R_{0} $ measurement rounds, such that $R$ rounds leads to a success probability of $\Pr \left(z^{*} \right){\rm \gg }\Pr _{R_{0} } \left(z^{*} \right)$. The dense measurement strategy is rooted in the theory of compressed sensing \cite{ref20,ref21,ref22,ref23}, which allows recovering noisy signals with a high efficiency in the field of traditional communications. Dense quantum measurement utilizes an $M_{r} $ randomized measurement operator that is defined as an $n$-bit length vector $M_{r} =\left(b_{1} M_{B} ,\ldots ,b_{n} M_{B} \right)^{T} $, where $b_{i} $ is a random variable, $b_{i} \in \left\{0,1\right\}$, ${\rm Pr}\left(0\right)={\rm Pr}\left(1\right)=0.5$, associated with the measurement of the $i$-th quantum state of the output quantum system, while $M_{B} $ is a quantum measurement in the computational basis $B$; thus, $b_{i} M_{B} =0$ if $b_{i} =0$ and $b_{i} M_{B} =M_{B} $ if $b_{i} =1$. As follows, the $M_{B} $ measurement in the computational basis is discarded if $b_{i} =0$. Then, the measurement result is post-processed via unit ${\rm {\mathcal P}}$ that integrates algorithms to determine the global optimal string $z^{*} $ from the results of the randomized measurements.

As we prove (see Theorem 2), the number $R_{0} $ of standard measurement rounds can be reduced to $R=\alpha Z^{2} K\log ^{4} \left(n\right)$ dense measurement rounds for an arbitrary quantum circuit, where $K\ge L_{0} \left(S\right)$ and $K{\rm \ll }n$, while $\alpha >0,Z>0$ are constants. At this number of measurement rounds, the success probability is $\Pr \left(z^{*} \right)=1-n^{-\log ^{3} \left(n\right)} \approx 1$ for any practical value of $n$.
We also prove that if the output of the quantum circuit is a computational basis quantum state, then $R_{0} $ can be reduced to $R={\rm {\mathcal O}}\left(\gamma K\log \left({\textstyle\frac{10n}{K}} \right)\right)$ dense measurement rounds, where $\gamma >0$ is a constant, such that $\Pr \left(z^{*} \right)=1-2\exp \left(-R\right)\approx 1,$ for any $R$ (see Theorem 3).

The novel contributions of our manuscript are as follows:
\begin{enumerate}
\item \textit{We define a novel quantum measurement theory called dense quantum measurement.}
\item \textit{We prove that dense measurement reduces the number of required measurement rounds to find a global optimal output.}
\item \textit{We prove that dense measurement significantly improves the success probability of finding a global optimal output.}
\item \textit{We provide an application scenario for quantum circuits with arbitrary unitary sequences, and for the dense measurement of computational basis quantum states in  gate-model quantum computer environment.} 
\item \textit{We reveal that the primary advantages of dense quantum measurement theory are the significantly lower measurement rounds and significantly higher success probabilities.} 
\end{enumerate}

This manuscript is organized as follows. In \sref{relw}, the related works are summarized. In \sref{sec2}, the problem statement is given. In \sref{sec3}, preliminaries are summarized. \sref{sec4} proposes the theorems and proofs. \sref{sec5} provides a performance evaluation. Finally, \sref{sec6} concludes the paper. Supplemental information is included in the Appendix.

\section{Related Works}
\label{relw}
The related works on quantum measurement theory, gate-model quantum computers and compressed sensing are summarized as follows.
\subsection{Quantum Measurement Theory}
Quantum measurement has a fundamental role in quantum mechanics with several different theoretical interpretations \cite{m1,m2,m3,m3b,m4,m5,m6,m7,p1,p2,p3,p4,p5}. The measurement of a quantum system collapses of the quantum system into an eigenstate of the operator corresponding to the measurement. The measurement of a quantum system produces a measurement result, the expected values of measurement are associated with a particular probability distribution. 

In quantum mechanics several different measurement techniques exist. In a projective measurement \cite{m1,m2,m3,m3b,m4,m5,m6,m7}, the measurement of the quantum system is mathematically interpreted by projectors that project any initial quantum state onto one of the basis states. The projective measurement is also known as von Neumann measurement \cite{m1}. In our manuscript the projective measurement with no post-processing on the measurement results is referred to as \textit{standard measurement}\footnote{It is motivated by the fact, that in a gate-model quantum computer environment the output quantum system is measured with respect to a particular computational basis.}. 

The von Neumann measurements are a special case of a more general measurement, the POVM measurement \cite{m3b,p1,p2,p3,p4,p5}. Without loss of generality, the POVM is a generalized measurement that can be interpreted as a von Neumann measurement that utilizes an additional quantum system (called ancilla). The POVM measurement is mathematically described by a set of positive operators such that their sum is the identity operator \cite{m8,m9,m10}. The POVM measurements therefore can be expressed in terms of projective measurements (see also Neumark's dilation theorem \cite{n1,n2,n3}).

Another subject connected to quantum measurement theory is quantum-state discrimination \cite{u1,u2,u3,u4,u5} that covers the distinguishability of quantum states, and the problem of differentiation between non-orthogonal quantum states. 

\subsection{Gate-Model Quantum Computers}
The theoretical background of the gate-model quantum computer environment utilized in our manuscript can be found in \cite{ref9} and \cite{ref10}.

In \cite{ref10}, the authors studied the subject of objective function evaluation of computational problems fed into a gate-model quantum computer environment. The work focuses on a qubit architectures with a fixed hardware structure in the physical layout. In the system model of a gate-model quantum computer, the quantum computer is modeled as a sequence of unitary operators (quantum gates). The quantum gates are associated with a particular control parameter called the gate parameter. The quantum gates can process one-qubit length and multi-qubit length quantum systems. The input quantum system (particularly a superposed quantum system) of the quantum circuit is transformed via a sequence of unitaries controlled via the gate parameters, and the output qubits are measured by a measurement array. The measurement in the model is realized by a projective measurement applied on a qubits that outputs a logical bit with value zero or one for each measured qubit. The result of the measurement is therefore a classical bitstring. The output bitstring is processed further to estimate the objective function of the quantum computer. 
The work also induces and opens several important optimization questions, such as the optimization of quantum circuits of gate-model quantum computers, optimization of objective function estimation, measurement optimization and optimization of post-processing in a gate-model quantum computer environment. In our particular work we are focusing on the optimization of the measurement phase. 

An optimization algorithm related to gate-model quantum computer architectures is defined in \cite{ref9}. The optimization algorithm is called “Quantum Approximate Optimization Algorithm” (QAOA). The aim of the algorithm is to output approximate solutions for combinatorial optimization problems fed into the quantum computer. The algorithm is implementable via gate-model quantum computers such that the depth of the quantum circuit grows linearly with a particular control parameter. The work also proposed the performance of the algorithm at the utilization of different gate parameter values for the unitaries of the gate-model computer environment.

In \cite{su}, the authors studied some attributes of the QAOA algorithm. The authors showed that the output distribution provided by QAOA cannot be efficiently simulated on any classical device. A comparison with the “Quantum Adiabatic Algorithm” (QADI) \cite{adi1,adi2} is also proposed in the work. The work concluded that the QAOA can be implemented on near-term gate-model quantum computers for optimization problems.

An application of the QAOA algorithm to a bounded occurrence constraint problem “Max E3LIN2” can be found in \cite{ref12}. In the analyzed problem, the input is a set of linear equations each of which has three boolean variables, and each equation outputs whether the sum of the variables is 0 or is 1 in a mod 2 representation. The work is aimed to demonstrate the capabilities of the QAOA algorithm in a gate-model quantum computer environment. 

In \cite{refa3}, the authors studied the objective function value distributions of the QAOA algorithm. The work concluded, at some particular setting and conditions the objective function values could become concentrated. A conclusion of the work, the number of running sequences of the quantum computer can be reduced.

In \cite{refa4}, the authors analyzed the experimental implementation of the QAOA algorithm on near-term gate-model quantum devices. The work also defined an optimization method for the QAOA, and studied the performance of QAOA. As the authors found, the QAOA can learn via optimization to utilize non-adiabatic mechanisms. 

In \cite{refa5}, the authors studied the implementation of QAOA with parallelizable gates. The work introduced a scheme to parallelize the QAOA for arbitrary all-to-all connected problem graphs in a layout of qubits. The proposed method was defined by single qubit operations and the interactions were set by pair-wise CNOT gates among nearest neighbors. As the work concluded, this structure allows for a parallelizable implementation in quantum devices with a square lattice geometry.

In \cite{ref11}, the authors defined a gate-model quantum neural network. The gate-model quantum neural network describes a quantum neural network implemented on gate-model quantum computer. The work focuses on the architectural attributes of a gate-model quantum neural network, and studies the training methods. A particular problem studied in the work is the classification of classical data sets which consist of bitstrings with binary labels. In the architectural model of a gate-model quantum neural network, the weights are represented by the gate parameters of the unitaries of the network, and the training method acts these gate parameters. As the authors stated, the gate-model quantum neural networks represent a practically implementable solution for the realization of quantum neural networks on near-term gate-model quantum computer architectures. 

In \cite{sat}, the authors defined a quantum algorithm that is realized via a quantum Markov process. The analyzed process of the work was a quantum version of a classical probabilistic algorithm for $k$-SAT defined in \cite{sch}. The work also studied the performance of the proposed quantum algorithm and compared it with the classical algorithm.

For a review on the noisy intermediate-scale quantum (NISQ) era and its technological effects and impacts on quantum computing, see \cite{refpr}. 

The subject of quantum computational supremacy (tasks and problems that quantum computers can solve but are beyond the capability of any classical computer) and its practical implications are studied in \cite{refha}. For a work on the complexity-theoretic foundations of quantum supremacy, see \cite{aar}.

A comprehensive survey on quantum channels can be found in \cite{ref19}, while for a survey on quantum computing technology, see \cite{refsur}. 

\subsection{Compressed Sensing}
In traditional information processing, compressed sensing \cite{ref20} is a technique to reduce the sampling rate to recover a signal from fewer samples than it is stated by the Shannon-Nyquist sampling theorem (that states that the sampling rate of a continuous-time signal must be twice its highest frequency for the reconstruction) \cite{ref20,ref21,ref22,ref23}. In the framework of compressed sensing, the signal reconstruction process exploits the sparsity of signals (in the context of compressed sensing, a signal is called sparse if most of its components are zero) \cite{ref23,refcs1,refcs2,refcs3,refcs4,refcs5}. Along with the sparsity, the restricted isometry property \cite{ref23,refcs1,refcs5} is also an important concept of compressed sensing, since, without loss of generality, this property makes it possible to yield unique outputs from the measurements of the sparse inputs. The restricted isometry property is also a well-studied problem in the field of compressed sensing \cite{refcs16,refcs17,refcs18,refcs19,refcs19,refcs20}. 

A special technique within compressed sensing is the so-called “1-bit” compressed sensing \cite{one1,one2,one3}, where 1-bit measurements are applied that preserve only the sign information of the measurements.

The application of compressed sensing covers the fields of traditional signal processing, image processing and several different fields of computational mathematics \cite{refcs6,refcs7,refcs8,refcs9,refcs10,refcs11,refcs12,refcs13}. 

The dense quantum measurement theory proposed in our manuscript also utilizes the fundamental concepts of compressed sensing. However, in our framework the primary aims are the reduction of the measurement rounds required to determine a global optimal output at arbitrary unitaries, and the boosting of the success probability of finding a global optimal output at a particular measurement round. The results are illustrated through a gate-model quantum computer environment.

\section{Problem Statement}
\label{sec2}
Let ${\left| X \right\rangle} $ be the superposed input system of a quantum circuit with a $QG$ quantum gate structure, formulated by $n$ quantum states, as
\begin{equation} \label{ZEqnNum910801} 
{\left| X \right\rangle} ={\textstyle\frac{1}{\sqrt{d^{n} } }} \sum _{z}{\left| z \right\rangle}  ,                                                                   
\end{equation} 
where $d$ is the dimension of the quantum system, ${\left| z \right\rangle} $ is a computational basis state and $U(\vec{\theta })$ is the unitary operation of $QG$, defined as a sequence of $L$ unitaries
\begin{equation} \label{ZEqnNum489626} 
U(\vec{\theta })=U_{L} \left(\theta _{L} \right)U_{L-1} \left(\theta _{L-1} \right),\ldots ,U_{1} \left(\theta _{1} \right),                                              
\end{equation} 
where $\vec{\theta }$ is the $L$-dimensional vector of the gate parameters of the unitaries (gate parameter vector):
\begin{equation} \label{3)} 
\vec{\theta }=\left(\theta _{1} ,\ldots ,\theta _{L} \right)^{T} .                                                                   
\end{equation} 
In \eqref{ZEqnNum489626}, an $i$-th unitary gate $U_{i} \left(\theta _{i} \right)$ is evaluated as
\begin{equation} \label{4)} 
U_{i} \left(\theta _{i} \right)=\exp \left(-i\theta _{i} P\right),                                                               
\end{equation} 
where $P$ is a generalized Pauli operator formulated by the tensor product of Pauli operators $\left\{\sigma _{X} ,\sigma _{Y} ,\sigma _{Z} \right\}$.

In a standard measurement setting, the ${\left| Y \right\rangle} $ output of $QG$ is 
\begin{equation} \label{5)} 
{\left| Y \right\rangle} =U(\vec{\theta }){\left| X \right\rangle}  
\end{equation} 
measured by a $M$ measurement operator, which yields an output string $z$ as
\begin{equation} \label{6)} 
z=M{\left| Y \right\rangle} .                                                                          
\end{equation} 
The global optimal output string $z^{*} $ is an output string that yields the optimal estimation $C\left(z^{*} \right)$ at a particular objective function $C$ fed into the quantum circuit as a maximization problem
\begin{equation} \label{7)} 
C\left(z^{*} \right)=\mathop{\max }\limits_{\forall m} C\left(z_{m} \right),                                                                  
\end{equation} 
where $C\left(z_{m} \right)$ is the estimate yielded in an $m$-th measurement round, $m=1,\ldots ,R_{0} $, while $z_{m} $ is the output string yielded in the $m$-th round.

Without loss of generality, after $R_{0} $ measurement rounds, the probability that the global optimal output string $z^{*} $ is determined is $\Pr _{R_{0} } \left(z^{*} \right)$; thus, $C\left(z^{*} \right)$ can be found with the same success probability, 
\begin{equation} \label{8)} 
{{\Pr }_{R_{0} }} C\left(z^{*} \right)={{\Pr }_{R_{0}} } \left(z^{*} \right).                                                                
\end{equation} 
The problems connected to the general measurement strategy to find $z^{*} $ are the high number of $R_{0} $ repetitions and the low $\Pr _{R_{0} } \left(z^{*} \right)$ success probability. Consequently, the standard measurement procedure requires high-cost quantum state preparations, the application of high-cost measurement arrays and high-precision control and calibrations in the physical layer.

Problems 1-3 summarize the problems to be solved.

\begin{problem} 
(System Model). Define a novel quantum measurement strategy for the significant reduction of the $R_{0} $ measurement rounds of standard measurements and for the significant improvement of the  $\Pr _{R_{0} } \left(z^{*} \right)$ success probability in determining a global optimal output $z^{*} $.
\end{problem}

\begin{problem} 
(General application). Define $R$ and $\Pr \left(z^{*} \right)$ for an arbitrary quantum circuit with $U(\vec{\theta })$. Prove the number $R$ of measurement rounds, $R{\rm \ll }R_{0} $, and the $\Pr \left(z^{*} \right)$ success probability, $\Pr \left(z^{*} \right){\rm \gg }\Pr _{R_{0} } \left(z^{*} \right)$. 
\end{problem} 

\begin{problem}
(Dense measurement of computational basis quantum states). Define $R$ and $\Pr \left(z^{*} \right)$ for an arbitrary quantum circuit with $U(\vec{\theta })=U_{B} $, where $U_{B} $ sets the computational basis $B$ \footnote{Throughout the manuscript, the term ``computational basis'' refers to a basis $B$, for which $L_{0} \left(S\right)\le K$ holds at a given $S=BX$, where $X$ is an input system.}. Prove the number $R$ of measurement rounds, $R{\rm \ll }R_{0} $, and the $\Pr \left(z^{*} \right)$ success probability, $\Pr \left(z^{*} \right){\rm \gg }\Pr _{R_{0} } \left(z^{*} \right)$.
\end{problem}

The resolutions of Problems 1-3 are given in Theorems 1-3, respectively.

\section{Preliminaries}
\label{sec3}
\subsection{Sub-Gaussian Distributions}
\label{subg}

A random variable $X$ is sub-Gaussian, if for the probability distribution of $X$,
\begin{equation} \label{115)} 
\Pr \left(\left|X\right|\ge \kappa \right)\le C_{1} e^{-C_{2} \kappa ^{2} }  
\end{equation} 
holds for $\forall \kappa >0$, where 
\begin{equation} \label{116)} 
C_{1} ,C_{2} >0 
\end{equation} 
are sub-Gaussian parameters. 

By theory, if $X$ is sub-Gaussian with 
\begin{equation} \label{ZEqnNum625894} 
{\rm {\mathbb{E}}}\left(X\right)=0,                                                                        
\end{equation} 
then there exists a constant $c^{*} $ depending on only $C_{1} ,C_{2} $ such that
\begin{equation} \label{ZEqnNum941241} 
{\rm {\mathbb{E}}}\left(\exp \left(\eta X\right)\le \exp \left(c^{*} \eta ^{2} \right)\right) 
\end{equation} 
for $\forall \eta \in {\rm {\rm R}}$. 

If \eqref{ZEqnNum941241} holds, then \eqref{ZEqnNum625894} is satisfied such that the $C_{1} $ sub-Gaussian parameter of $X$ is
\begin{equation} \label{119)} 
C_{1} =2,                                                                            
\end{equation} 
and $C_{2} $ is as
\begin{equation} \label{120)} 
C_{2} ={\textstyle\frac{1}{4c^{*} }} .                                                                             
\end{equation} 
An $M\times N$ random matrix $M$ is a sub-Gaussian random matrix, if 
\begin{equation} \label{121)} 
\Pr \left(\left|M_{j,k} \right|\ge \kappa \right)\le C_{1} e^{-C_{2} \kappa ^{2} }  
\end{equation} 
for $\forall \kappa >0$, where $M_{j,k} $ is the $\left(j,k\right)$-th element of $M$, $j\in \left[M\right],k\in \left[N\right]$, where 
\begin{equation} \label{122)} 
C_{1} ,C_{2} >0 
\end{equation} 
are sub-Gaussian parameters.

\section{Methods}
\label{sec4}
\subsection{System Model}
\begin{theorem} 
 (Dense measurement). A $QG$ structure with unitary $U(\vec{\theta })=U_{B} U(\vec{\theta '})$, where the unitary sets an arbitrary computational basis $B$ for an $n$-length input ${\left| X \right\rangle} $ as $U_{B} {\left| X \right\rangle} ={\left| S \right\rangle} $, such that $L_{0} \left(S\right)\le K$, $K{\rm \ll }n$, holds for the $L_{0} $-norm of $S$, where $S$ is a classical representation of ${\left| S \right\rangle} $, while $U(\vec{\theta '})$ is the actual setting of the unitaries of $QG$ at ${\left| S \right\rangle} $ and with a $M_{r} $ random measurement operator, allows the determination of the global optimal output $z^{*} $ and global optimal estimate $C\left(z^{*} \right)$ at a particular objective function $C$ as $\varepsilon =\Pr \left(\delta _{K} \ge \chi \right)$ holds, where $\delta _{K} $ and $\chi $ are constants depending on ${\rm {\mathcal Q}}={\rm {\mathcal M}}U(\vec{\theta '})$, where ${\rm {\mathcal M}}=\left(M_{r}^{\left(1\right)} ,\ldots ,M_{r}^{\left(R\right)} \right)$ and $M_{r}^{\left(m\right)} $ is the measurement operator of the $m$-th dense measurement round $m=1,\ldots ,R$.
\end{theorem}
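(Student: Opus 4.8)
The plan is to recast the dense measurement procedure as a sparse recovery problem and then to invoke the restricted isometry property (RIP) to guarantee that the post-processing unit $\mathcal{P}$ reconstructs the classical representation $S$ of $\left| S \right\rangle$, from which the global optimal string $z^{*}$ and its estimate $C\left(z^{*}\right)$ follow. First I would exploit sparsity: since $U_{B}\left| X \right\rangle=\left| S \right\rangle$ with $L_{0}\left(S\right)\le K\ll n$, the vector $S$ is $K$-sparse, which is exactly the regime of compressed sensing — recover a $K$-sparse $n$-dimensional object from $R\ll n$ linear observations. Writing the $m$-th dense round as the action of $M_{r}^{\left(m\right)}=\left(b_{1}^{\left(m\right)}M_{B},\ldots,b_{n}^{\left(m\right)}M_{B}\right)^{T}$ on $U(\vec{\theta '})\left| X \right\rangle$ and stacking the $R$ rounds, one obtains the $R\times n$ sensing operator $\mathcal{Q}=\mathcal{M}U(\vec{\theta '})$ with $\mathcal{M}=\left(M_{r}^{\left(1\right)},\ldots,M_{r}^{\left(R\right)}\right)$. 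Because the rows of $\mathcal{M}$ are, up to normalization, independent Bernoulli-masked copies of the fixed-basis measurement $M_{B}$, $\mathcal{M}$ is a sub-Gaussian random matrix in the sense of \sref{subg}, with sub-Gaussian parameters $C_{1}=2$ and $C_{2}=1/\left(4c^{*}\right)$.

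Next I would show that $\mathcal{Q}$ inherits this structure: since $U(\vec{\theta '})$ is unitary, right-multiplication by it is an $\ell_{2}$-isometry on $\mathbb{C}^{n}$ and leaves the moment-generating-function bound of \sref{subg} intact, so the standard concentration-of-measure estimate for sub-Gaussian matrices transfers to $\mathcal{Q}$ essentially verbatim. This produces the bound $\varepsilon=\Pr\left(\delta_{K}\ge\chi\right)$ on the probability that the $K$-restricted isometry constant $\delta_{K}$ of $\mathcal{Q}$ exceeds the recovery threshold $\chi$, with $\varepsilon$ decaying once the number of rounds $R$ passes a quantity polylogarithmic in $n$ times $K$ (the precise rates being what Theorems 2 and 3 pin down for the two circuit classes). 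Conditioned on the complement of the event $\left\{\delta_{K}\ge\chi\right\}$, the classical RIP recovery guarantee says the $\ell_{1}$-minimization decoder implemented inside $\mathcal{P}$ returns $S$ exactly when $S$ is $K$-sparse (and with best-$K$-term error otherwise); recovering $S$ fixes $\left| S \right\rangle$, and since $U(\vec{\theta '})$ is the known actual gate setting of $QG$, the output state and its measurement statistics are determined, so $z^{*}$ and $C\left(z^{*}\right)=\max_{\forall m}C\left(z_{m}\right)$ are obtained with overall success probability $1-\varepsilon$.

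The hard part will be the second step: one must certify that the very specific randomized operator $M_{r}$ — a Bernoulli mask on a fixed-basis projective measurement, then composed with an arbitrary unitary — genuinely yields a matrix obeying the RIP with the claimed probability, and not merely a generic i.i.d.\ sub-Gaussian matrix. This requires care about the normalization that makes the columns near-isometric in expectation, the independence structure across the $R$ rounds, and the fact that each round is itself a projective measurement with a random outcome, so the ``linear observation'' picture has to be justified at the level of the post-processed statistics rather than of individual collapses. A secondary subtlety is carrying the two quantities $\delta_{K}$ and $\chi$ consistently through Theorems 1--3, so that the conditional statement here specializes correctly to the unconditional round counts claimed in Theorems 2 and 3.
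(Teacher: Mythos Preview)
Your proposal is correct and follows essentially the same route as the paper: recast the dense measurement as a compressed-sensing problem for the $K$-sparse vector $S$, identify $\mathcal{Q}=\mathcal{M}U(\vec{\theta '})$ as a sub-Gaussian sensing matrix, establish $\Pr(\delta_K\ge\chi)=\varepsilon$ via concentration, and then invoke $\ell_1$-minimization (basis pursuit) in $\mathcal{P}$ to recover $S$ and hence $z^{*}$. The only substantive addition in the paper's argument is that it makes the covering-net step explicit---choosing a net $\Gamma\subset\mathcal{B}_{\Upsilon}$ of cardinality $\le(1+2/\Omega)^{K}$, applying the sub-Gaussian concentration pointwise on $\Gamma$, and then extending to all of $\mathcal{B}_{\Upsilon}$ via the usual $\kappa=(1-2\Omega)\chi$ trick---to arrive at the concrete round count $R=\tfrac{2}{3c\chi^{2}}\bigl(K(9+2\log(n/K))+2\log(2/\varepsilon)\bigr)$; you defer this to ``standard concentration-of-measure,'' which is fair at the level of a strategy.
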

\begin{proof}
First, we rewrite \eqref{ZEqnNum489626} as 
\begin{equation} \label{ZEqnNum502902} 
U(\vec{\theta })=U_{B} U(\vec{\theta '}),                                                                      
\end{equation} 
where $U_{B} $ is a unitary that sets a computational basis $B$ and $U(\vec{\theta '})$ is a unitary operation that sets the unitaries, such that
\begin{equation} \label{10)} 
\begin{split}
  & {{U}_{B}}U( {\vec{{\theta }'}}){{( {{U}_{B}}U( {\vec{{\theta }'}}))}^{\dagger }} \\ 
 & ={{U}_{B}}U( {\vec{{\theta }'}})U_{B}^{\dagger }{{( U( {\vec{{\theta }'}}))}^{\dagger }} \\ 
 & ={{U}_{B}}U( {\vec{{\theta }'}}){{( U( {\vec{{\theta }'}}))}^{\dagger }}U_{B}^{\dagger } \\ 
 & ={{U}_{B}}IU_{B}^{\dagger } \\ 
 & ={{U}_{B}}U_{B}^{\dagger }=I,  
\end{split}
\end{equation} 
where $I$ is the identity and $\vec{\theta '}$ is the $L$-dimensional vector of the gate parameters of $U(\vec{\theta '})$. Applying the unitary $U_{B} $ on input system ${\left| X \right\rangle} $ yields the $n$-length quantum system ${\left| S \right\rangle} ={\left| s_{1} ,\ldots ,s_{n}  \right\rangle} $,
\begin{equation} \label{ZEqnNum536134} 
{\left| S \right\rangle} =U_{B} {\left| X \right\rangle} ,                                                                        
\end{equation} 
where the computational basis $B$ for $U_{B} $ in \eqref{ZEqnNum502902} is selected such that for the $L_{0} $-norm of $S$ the following relation holds
\begin{equation} \label{ZEqnNum334843} 
L_{0} \left(S\right)=\left\| S\right\| _{0} \le K,                                                                    
\end{equation} 
where 
\begin{equation} \label{ZEqnNum107628} 
S=BX 
\end{equation} 
is a classical representation of ${\left| S \right\rangle} $, $X$ is a classical representation of ${\left| X \right\rangle} $ and $K{\rm \ll }n$. Therefore, $B$ can be an arbitrary computational basis for which \eqref{ZEqnNum334843} holds at a given \eqref{ZEqnNum107628} (For example, if $B$ is the Fourier basis, then $U_{B} $ realizes a quantum Fourier transform).

The output of $QG$ at \eqref{ZEqnNum502902} and \eqref{ZEqnNum536134} is therefore written as
\begin{equation} \label{ZEqnNum755756} 
\begin{split}
   U( {\vec{\theta }})\left| X \right\rangle &={{U}_{B}}U( {\vec{{\theta }'}})\left| X \right\rangle  \\ 
 & =U( {\vec{{\theta }'}})\left( {{U}_{B}}\left| X \right\rangle  \right) \\ 
 & =U( {\vec{{\theta }'}})\left| S \right\rangle  \\ 
 & =\left| G \right\rangle  \\ 
 & =\left| {{g}_{1}},\ldots ,{{g}_{n}} \right\rangle ,  
\end{split}
\end{equation} 
whose state is measured by an $M_{r} $ random measurement operator, defined as an $n$-bit length vector 
\begin{equation} \label{ZEqnNum688653} 
M_{r} =\left(b_{1} M_{B} ,\ldots ,b_{n} M_{B} \right)^{T} ,                                                                
\end{equation} 
where $b_{i} $ is a random variable,
\begin{equation} \label{ZEqnNum478723} 
b_{i} =\left\{\begin{array}{l} {0,{\rm \; with\; Pr}\left(0\right)=0.5} \\ {1,{\rm \; with\; Pr}\left(1\right)=0.5} \end{array}\right. , 
\end{equation} 
associated with the measurement of the $i$-th quantum system ${\left| g_{i}  \right\rangle} $ of ${\left| G \right\rangle} $ in \eqref{ZEqnNum755756}, and $M_{B} $ is a measurement in the computational basis $B$. 

Thus, the measurement of the $i$-th quantum system ${\left| g_{i}  \right\rangle} $ of ${\left| G \right\rangle} $ is defined via the following rule:
\begin{equation} \label{17)} 
b_{i} M_{B} =\left\{\begin{array}{l} {0,{\rm \; if\; }b_{i} =0,} \\ {M_{B} ,{\rm \; if\; }b_{i} =1} \end{array}\right. . 
\end{equation} 
In other words, the measurement result $M_{r} \left({\left| g_{i}  \right\rangle} \right)$ is kept only if $b_{i} =1$ in \eqref{ZEqnNum688653}; otherwise, the measurement result is discarded and replaced by a zero element. This results output $y_{i} $, as
\begin{equation} \label{ZEqnNum452944} 
y_{i} =\left\{\begin{array}{l} {0,{\rm \; if\; }b_{i} =0,} \\ {M_{B} \left({\left| g_{i}  \right\rangle} \right),{\rm \; if\; }b_{i} =1} \end{array}\right. . 
\end{equation} 
This measurement strategy defines $M_{r} $ \eqref{ZEqnNum688653} as a random Bernoulli vector \cite{ref20,ref21,ref22,ref23}. Then, the $n$-bit length output $Y$, is as
\begin{equation} \label{ZEqnNum634285} 
\begin{split}
   Y&={{M}_{r}}\left( \left| G \right\rangle  \right) \\ 
 & ={{M}_{r}}U( {\vec{{\theta }'}})\left| S \right\rangle  \\ 
 & ={{M}'_{r}}\left| S \right\rangle  \\ 
 & ={{\beta }_{C}}\Lambda  \\ 
 & ={{{{\beta }'}}_{C}}S,  
\end{split}
\end{equation} 
where $M'_{r} $ is
\begin{equation} \label{ZEqnNum152948} 
M'_{r} =M_{r} U(\vec{\theta '}) 
\end{equation} 
while $\beta _{C} $ is an $n$-length classical vector formulated via the $b_{i} $ bits of \eqref{ZEqnNum478723} as
\begin{equation} \label{ZEqnNum702773} 
\beta _{C} =\left(b_{1} ,\ldots ,b_{n} \right)^{T} ,                                                                
\end{equation} 
and
\begin{equation} \label{23)} 
\beta '_{C} =\beta _{C} U(\vec{\theta '}),                                                                   
\end{equation} 
and $\Lambda $ is 
\begin{equation} \label{24)} 
\Lambda =U(\vec{\theta '})S.                                                                        
\end{equation} 
As follows, applying $M_{r} $ \eqref{ZEqnNum688653} on ${\left| G \right\rangle} $ \eqref{ZEqnNum755756} is equivalent to applying $M'_{r} $ \eqref{ZEqnNum152948} on the computational basis state ${\left| S \right\rangle} $ \eqref{ZEqnNum536134}.

As \eqref{ZEqnNum634285} is determined via \eqref{ZEqnNum688653}, the goal is to determine $C\left(z\right)$ at a particular objective function $C$ via a post-processing ${\rm {\mathcal P}}$. 

First, from $Y$ \eqref{ZEqnNum634285}, the computational basis vector $S$ can be recovered as $\tilde{S}$ via ${\rm {\mathcal P}}$, as a minimization \cite{ref23}, 
\begin{equation} \label{ZEqnNum910643} 
\tilde{S}=\mathop{\arg \min }\limits_{S} L_{1} \left(S\right) 
\end{equation} 
such that
\begin{equation} \label{ZEqnNum128831} 
Y=\beta '_{C} S 
\end{equation} 
where $L_{1} $ is the $L_{1} $-norm. The ${\rm {\mathcal P}}$ unit utilizes a basis pursuit algorithm \cite{ref20,ref21,ref22,ref23} for the $L_{1} $-minimization in \eqref{ZEqnNum910643}. Then, using \eqref{ZEqnNum910643}, $\tilde{\Lambda }$ is defined as
\begin{equation} \label{ZEqnNum138005} 
\tilde{\Lambda }=U(\vec{\theta '})\tilde{S}.                                                                   
\end{equation} 
Thus, from \eqref{ZEqnNum138005}, the output vector $z$ is evaluated as
\begin{equation} \label{ZEqnNum667719} 
\begin{split}
   z&={{B}^{-1}}\mathcal{P}\left( Y \right) \\ 
 & ={{B}^{-1}}( {\tilde{\Lambda }}) \\ 
 & =U( {\vec{\theta }})\tilde{X},  
\end{split}
\end{equation} 
where ${\rm {\mathcal P}}\left(Y\right)$ is the post-processing \eqref{ZEqnNum910643} applied on $Y$, $B^{-1} $ is the inverse basis transformation and $\tilde{X}$ is a classical representation of ${\left| X \right\rangle} $. As follows, from \eqref{ZEqnNum667719}, the $C\left(z\right)$ estimate yields
\begin{equation} \label{29)} 
C\left(z\right)=C\left(B^{-1} {\rm {\mathcal P}}\left(Y\right)\right).                                                          
\end{equation} 
Then, assume that the procedure repeats for $R$ rounds. The $R$ rounds of dense measurement are defined via an $n\times R$ measurement matrix ${\rm {\mathcal M}}$ as
\begin{equation} \label{ZEqnNum508923} 
{\rm {\mathcal M}}=\left(M_{r}^{\left(1\right)} ,\ldots ,M_{r}^{\left(R\right)} \right),                                                             
\end{equation} 
where $M_{r}^{\left(m\right)} $ is an $n$-size random measurement vector \eqref{ZEqnNum688653} of the $m$-th measurement round $m=1,\ldots ,R$, as
\begin{equation} \label{ZEqnNum252234} 
M_{r}^{\left(m\right)} =\left(b_{1}^{\left(m\right)} M_{B} ,\ldots ,b_{n}^{\left(m\right)} M_{B} \right)^{T} ,                                                   
\end{equation} 
where $b_{i}^{\left(m\right)} $ is the $i$-th bit of $M_{r}^{\left(m\right)} $ defined via \eqref{ZEqnNum478723}, and ${M'_{r}}^{\left( m \right)}$ of the $m$-th round is 
\begin{equation} \label{32)} 
{M'_{r}}^{\left(m\right)} =M_{r}^{\left(m\right)} U(\vec{\theta '}),                                                             
\end{equation} 
and ${\beta '_{C}}^{\left( m \right)}$ of the $m$-th round is 
\begin{equation} \label{33)} 
{\beta '_{C}}^{\left(m\right)} =\beta _{C}^{\left(m\right)} U(\vec{\theta '}),                                                             
\end{equation} 
where 
\begin{equation} \label{ZEqnNum942401} 
\beta _{C}^{\left(m\right)} =\left(b_{1}^{\left(m\right)} ,\ldots ,b_{n}^{\left(m\right)} \right)^{T} . 
\end{equation} 
For the $R$ rounds, define the $n\times R$ orthogonal matrix ${\rm {\mathcal Q}}$ as
\begin{equation} \label{ZEqnNum527280} 
{\rm {\mathcal Q}}={\rm {\mathcal M}}U(\vec{\theta '})=\left({M'_{r}}^{\left(1\right)} ,\ldots ,{M'_{r}}^{\left(R\right)} \right),                                               
\end{equation} 
and the measurement output matrix $Y^{R} $ as
\begin{equation} \label{ZEqnNum395935} 
Y^{R} ={\rm {\mathcal Q}}{\left| S \right\rangle} =\left(Y^{\left(1\right)} ,\ldots ,Y^{\left(R\right)} \right),                                                    
\end{equation} 
where $Y^{\left(m\right)} $ is the measurement result vector \eqref{ZEqnNum128831} of the $m$-th round.

The problem is therefore to find the optimal value of $R$, such that the total error probability at the end of $R$ rounds
\begin{equation} \label{ZEqnNum977667} 
\Pr \left(z\ne z^{*} \right)=\xi  
\end{equation} 
picks up a given arbitrary value $\xi $ that is determined via the success of the $L_{1} $ minimization \eqref{ZEqnNum910643} in the ${\rm {\mathcal P}}$ unit.  

After some argumentations on the probability distribution of ${\rm {\mathcal Q}}$ \eqref{ZEqnNum527280}, at $R$ measurement rounds a concentration relation can be written as
\begin{equation} \label{38)} 
\begin{split}
  & \Pr \left( \left| {{\left( {{L}_{2}}\left( \mathcal{Q}\left| S \right\rangle  \right) \right)}^{2}}-{{\ell }^{2}}\left( \left| S \right\rangle  \right) \right|\ge \kappa \left( {{\ell }^{2}}\left( \left| S \right\rangle  \right) \right) \right) \\ 
 & =\Pr \left( \left| {{\left( {{L}_{2}}\left( \mathcal{M}\left| G \right\rangle  \right) \right)}^{2}}-{{\ell }^{2}}\left( \left| G \right\rangle  \right) \right|\ge \kappa \left( {{\ell }^{2}}\left( \left| G \right\rangle  \right) \right) \right) \\ 
 & \le 2\exp \left( -c{{\kappa }^{2}}R \right),  
\end{split}
\end{equation} 
where $c$ is a constant depending on the sub-Gaussian parameters $C_{1} ,C_{2} >0$ (see \sref{subg}) of the sub-Gaussian matrix ${\rm {\mathcal Q}}$ \eqref{ZEqnNum527280}, $L_{2} $ is the Euclidean norm and $\ell ^{2} $ is the $\ell ^{2} $-norm of a quantum system, $\ell ^{2} \left({\left| \psi  \right\rangle} \right)=\sqrt{\sum _{x}\left|\psi \left(x\right)\right|^{2}  } =1$, where $\left|\psi \left(x\right)\right|^{2} =\Pr \left(x\right)$ and  $\sqrt{\int \Pr \left(x\right)dx } =\sqrt{\int \left|\psi \left(x\right)\right|^{2} dx } =1$, while $\kappa $ is $\kappa \in \left(0,1\right)$. 

By theory, the $K$-th restricted isometry constant \cite{ref20,ref21,ref22,ref23} $\delta _{K} =\delta _{K} \left({\rm {\mathcal Q}}\right)$ of matrix ${\rm {\mathcal Q}}$ is the smallest $\chi \ge 0$ such that
\begin{equation} \label{ZEqnNum487772} 
\left(1-\chi \right)\ell ^{2} \left({\left| S \right\rangle} \right)\le \left(L_{2} \left({\rm {\mathcal Q}}{\left| S \right\rangle} \right)\right)^{2} \le \left(1+\chi \right)\ell ^{2} \left({\left| S \right\rangle} \right),                              
\end{equation} 
for $\forall S$ where $L_{0} \left(S\right)\le K$. 

Then, for a given $\chi $, the restricted isometry constant \cite{ref20,ref21,ref22,ref23} $\delta _{K} $ of ${\rm {\mathcal Q}}={\rm {\mathcal M}}U(\vec{\theta '})$ satisfies relation $\delta _{K} <\chi $ with probability 
\begin{equation} \label{40)} 
\Pr \left(\delta _{K} <\chi \right)=1-\varepsilon  
\end{equation} 
where $\varepsilon \in \left( 0,1 \right)$, if $R$ is selected as
\begin{equation} \label{ZEqnNum480800} 
R=A{\textstyle\frac{1}{\chi ^{2} }} \left(K\left(9+2\log \left({\textstyle\frac{n}{K}} \right)\right)+2\log \left(2\left({\textstyle\frac{1}{\varepsilon }} \right)\right)\right),                                       
\end{equation} 
where
\begin{equation} \label{42)} 
A={\textstyle\frac{2}{3c}} .                                                                     
\end{equation} 
The motivation for the selection of $R$ is as follows. The value of $R$ in \eqref{ZEqnNum480800} guarantees that the relation $\delta _{K} <\chi $ holds with probability $1-\varepsilon$, as it is given in \eqref{40)}. If $R$ is greater than \eqref{ZEqnNum480800}, then $\Pr \left(\delta _{K} <\chi \right)> 1-\varepsilon$, while if $R$ is lower than the value given in \eqref{ZEqnNum480800}, then $\Pr \left(\delta _{K} <\chi \right)< 1-\varepsilon$. As a corollary, the lowest value of $R$ to satisfy the relation $\delta _{K} <\chi $ with probability at least $1-\varepsilon$, is as given in \eqref{ZEqnNum480800}.
To prove \eqref{ZEqnNum480800}, express $\delta _{K} $ via \eqref{ZEqnNum487772} as
\begin{equation} \label{ZEqnNum763314} 
\delta _{K} =\sup _{\Upsilon \subset \left[n\right],\left|\Upsilon \right|=K} L_{2} \left({\rm {\mathcal Q}}_{\Upsilon }^{*} {\rm {\mathcal Q}}_{\Upsilon } -I\right),                                             
\end{equation} 
where $\Upsilon $ is subset, ${\rm {\mathcal Q}}_{\Upsilon } $ is a submatrix, $I$ is the identity matrix, $\left|\Upsilon \right|$ is the cardinality of subset $\Upsilon $ and $\left[n\right]=\left\{1,\ldots ,n\right\}$ is the set of natural numbers not exceeding $n$. 

The formula of \eqref{ZEqnNum763314} is equivalent to \eqref{ZEqnNum487772}, since \eqref{ZEqnNum487772} can be rewritten as
\begin{equation} \label{44)} 
\left|\left(L_{2} \left({\rm {\mathcal Q}}_{\Upsilon } {\left| S' \right\rangle} \right)\right)^{2} -\ell ^{2} \left({\left| S' \right\rangle} \right)\right|\le \chi \ell ^{2} \left({\left| S' \right\rangle} \right) 
\end{equation} 
for $\forall \Upsilon \subset \left[n\right]$, $\left|\Upsilon \right|\le K$ and $S'\subset S$. Let 
\begin{equation} \label{45)} 
Y_{\Upsilon } ={\rm {\mathcal Q}}_{\Upsilon } {\left| S' \right\rangle} ,                                                                   
\end{equation} 
and
\begin{equation} \label{46)} 
Z_{\Upsilon } =H{\left| S' \right\rangle} ,                                                                   
\end{equation} 
where $H$ is a Hermitian matrix, 
\begin{equation} \label{ZEqnNum108272} 
H={\rm {\mathcal Q}}_{\Upsilon }^{*} {\rm {\mathcal Q}}_{\Upsilon } -I,                                                                  
\end{equation} 
then
\begin{equation} \label{48)} 
\begin{split}
  & {{\left( {{L}_{2}}\left( {{Y}_{\Upsilon }} \right) \right)}^{2}}-{{\ell }^{2}}\left( \left| {{S}'} \right\rangle  \right) \\ 
 & =\left\langle {{Y}_{\Upsilon }},{{Y}_{\Upsilon }} \right\rangle -\left\langle {S}',{S}' \right\rangle  \\ 
 & =\left\langle {{Z}_{\Upsilon }},{S}' \right\rangle   
\end{split}
\end{equation} 
Therefore, $L_{2} \left(H\right)$ can be expressed as a maximization
\begin{equation} \label{49)} 
L_{2} \left(H\right)=\mathop{\max }\limits_{S'} {\textstyle\frac{\left\langle Z_{\Upsilon } ,S'\right\rangle }{\ell ^{2} \left({\left| S' \right\rangle} \right)}}  
\end{equation} 
that leads to relation
\begin{equation} \label{50)} 
\mathop{\max }\limits_{\Upsilon \subset \left[n\right],\left|\Upsilon \right|=K} L_{2} \left(H\right)\le \chi .                                                        
\end{equation} 
Then, the union bound takes over all $(\begin{smallmatrix}
   n  \\
   K  \\
\end{smallmatrix})$ subsets $\Upsilon \subset \left[n\right]$ of cardinality $K$, yields the relation of
\begin{equation} \label{ZEqnNum616872} 
\begin{split}
   \Pr \left( {{\delta }_{K}}\ge \chi  \right)& \le \sum\limits_{{{\sup }_{\Upsilon \subset \left[ n \right],\left| \Upsilon  \right|=K}}}{\Pr \left( {{L}_{2}}\left( H \right)\ge \chi  \right)} \\ 
 & \le 2\left( \begin{smallmatrix}
   n  \\
   K  \\
\end{smallmatrix} \right){{\left( 1+\tfrac{2}{\Omega } \right)}^{K}}\exp \left( -c{{\chi }^{2}}{{\left( 1-2\Omega  \right)}^{2}}R \right) \\ 
 & \le 2{{\left( \tfrac{en}{K} \right)}^{K}}{{\left( 1+\tfrac{2}{\Omega } \right)}^{K}}\exp \left( -c{{\chi }^{2}}{{\left( 1-2\Omega  \right)}^{2}}R \right),  
\end{split}
\end{equation} 
where we used that for integers $m\ge k>0$,  $\left({\textstyle\frac{m}{k}} \right)^{k} \le (\begin{smallmatrix}
   m  \\
   k  \\
\end{smallmatrix}) \le \left({\textstyle\frac{em}{k}} \right)^{k},$ by theory \cite{ref20,ref21,ref22,ref23}.

It can be verified that in \eqref{ZEqnNum616872} for $\Upsilon \subset \left[n\right]$ with $\left|\Upsilon \right|=K$, the relation
\begin{equation} \label{ZEqnNum849837} 
\Pr \left(L_{2} \left(H\right)<\chi \right)=1-\varepsilon  
\end{equation} 
holds, if  
\begin{equation} \label{ZEqnNum669620} 
R={\textstyle\frac{2}{3c\chi ^{2} }} \left(7K+2\log \left(2\left({\textstyle\frac{1}{\varepsilon }} \right)\right)\right), 
\end{equation} 
since for  
\begin{equation} \label{ZEqnNum678806} 
\kappa =\left(1-2\Omega \right)\chi  
\end{equation} 
it can be verified that
\begin{equation} \label{ZEqnNum347354} 
\Pr \left(L_{2} \left(H\right)\ge \chi \right)\le 2\left(1+{\textstyle\frac{2}{\Omega }} \right)^{K} \exp \left(-c\left(1-2\Omega \right)^{2} \chi ^{2} R\right).                         
\end{equation} 
Thus, \eqref{ZEqnNum849837} is satisfied only if 
\begin{equation} \label{ZEqnNum582908} 
R={\textstyle\frac{1}{c\left(1-2\Omega \right)^{2} \chi ^{2} }} \left(\log \left(1+{\textstyle\frac{2}{\Omega }} \right)K+\log \left(2\left({\textstyle\frac{1}{\varepsilon }} \right)\right)\right). 
\end{equation} 
Then, setting $\Omega $ in \eqref{ZEqnNum582908} to
\begin{equation} \label{ZEqnNum288645} 
\Omega ={\textstyle\frac{2}{e^{3.5} -1}}  
\end{equation} 
so that
\begin{equation} \label{58)} 
{\textstyle\frac{1}{\left(1-2\Omega \right)^{2} }} \le {\textstyle\frac{4}{3}}  
\end{equation} 
and
\begin{equation} \label{59)} 
\log \left(1+{\textstyle\frac{2}{\Omega }} \right){\textstyle\frac{1}{\left(1-2\Omega \right)^{2} }} \le {\textstyle\frac{14}{3}} ,                                                          
\end{equation} 
yields \eqref{ZEqnNum669620} \cite{ref23}.

Note that it also can be shown that for $\Omega \in \left(0,0.5\right)$ in \eqref{ZEqnNum288645}, there exists a finite subset $\Gamma $ of a unit ball ${\rm {\mathcal B}}_{\Upsilon } =\left\{X,{\rm supp}X\subset \Upsilon ,\ell ^{2} \left({\left| X \right\rangle} \right)\le 1\right\}$ such that $\left|\Gamma \right|$ is 
\begin{equation} \label{60)} 
\left|\Gamma \right|\le \left(1+{\textstyle\frac{2}{\Omega }} \right)^{K}  
\end{equation} 
and
\begin{equation} \label{61)} 
\mathop{\min }\limits_{x\in \Gamma } L_{2} \left(z-x\right)\le \Omega  
\end{equation} 
for $\forall z\in {\rm {\mathcal B}}_{\Upsilon } $, such that for $x\subset \Gamma $
\begin{equation} \label{62)} 
\begin{split}
   \Pr& \left( \left| {{\left( {{L}_{2}}\left( \mathcal{Q}\left| x \right\rangle  \right) \right)}^{2}}-{{\ell }^{2}}\left( \left| x \right\rangle  \right) \right|\ge \kappa \left( {{\ell }^{2}}\left( \left| x \right\rangle  \right) \right) \right) \\ 
 & \le \sum\limits_{x\in \Gamma }{\Pr \left( \left| {{\left( {{L}_{2}}\left( \mathcal{Q}\left| x \right\rangle  \right) \right)}^{2}}-{{\ell }^{2}}\left( \left| x \right\rangle  \right) \right|\ge \kappa \left( {{\ell }^{2}}\left( \left| x \right\rangle  \right) \right) \right)} \\ 
 & \le 2\left| \Gamma  \right|\exp \left( -c{{\kappa }^{2}}R \right) \\ 
 & \le 2{{\left( 1+\tfrac{2}{\Omega } \right)}^{K}}\exp \left( -c{{\kappa }^{2}}R \right),  
\end{split}
\end{equation} 
and 
\begin{equation} \label{ZEqnNum983488} 
\begin{split}
   \Pr &\left( \left| {{\left( {{L}_{2}}\left( \mathcal{Q}\left| x \right\rangle  \right) \right)}^{2}}-{{\ell }^{2}}\left( \left| x \right\rangle  \right) \right|<\kappa \left( {{\ell }^{2}}\left( \left| x \right\rangle  \right) \right),\forall x\subset \Gamma  \right) \\ 
 & =1-2{{\left( 1+\tfrac{2}{\Omega } \right)}^{S}}\exp \left( -c{{\kappa }^{2}}R \right).  
\end{split}
\end{equation} 
By finding the values of $\Omega $ and $\kappa $, the relation 
\begin{equation} \label{64)} 
\left|\left(L_{2} \left({\rm {\mathcal Q}}{\left| z \right\rangle} \right)\right)^{2} -\ell ^{2} \left({\left| z \right\rangle} \right)\right|=L_{2} \left(H\right)\le \chi  
\end{equation} 
can be satisfied for $\forall z\in {\rm {\mathcal B}}_{\Upsilon } $. 

It can be proven at $H$ \eqref{ZEqnNum108272} and
\begin{equation} \label{65)} 
W=H{\left| x \right\rangle} ,                                                                       
\end{equation} 
for $\forall x\subset \Gamma $ that the relation
\begin{equation} \label{66)} 
\left|\left\langle W,x\right\rangle \right|<\kappa ,                                                                     
\end{equation} 
holds. Thus, for a given $z$ and $x\subset \Gamma $, such that $L_{2} \left(z-x\right)\le \Omega \le {\textstyle\frac{1}{2}} $, 
\begin{equation} \label{67)} 
\begin{split}
   \left| \left\langle V,z \right\rangle  \right|&=\left| \left\langle W,x \right\rangle +\left\langle D,z-x \right\rangle  \right| \\ 
 & \le \left| \left\langle W,x \right\rangle  \right|+\left| \left\langle D,z-x \right\rangle  \right| \\ 
 & <\kappa +{{L}_{2}}\left( H \right)\sqrt{{{\ell }^{2}}\left( \left| z+x \right\rangle  \right)}\sqrt{{{\ell }^{2}}\left( \left| z-x \right\rangle  \right)} \\ 
 & \le \kappa +2{{L}_{2}}\left( H \right)\Omega   
\end{split}
\end{equation} 
where 
\begin{equation} \label{68)} 
V=H{\left| z \right\rangle} ,                                                                      
\end{equation} 
and 
\begin{equation} \label{69)} 
D=H{\left| z+x \right\rangle} .                                                                  
\end{equation} 
Then, a maximization over $\forall z\in {\rm {\mathcal B}}_{\Upsilon } $ yields
\begin{equation} \label{70)} 
L_{2} \left(H\right)<\kappa +2L_{2} \left(H\right)\Omega .                                                           
\end{equation} 
Thus,
\begin{equation} \label{71)} 
L_{2} \left(H\right)\le {\textstyle\frac{\kappa }{1-2\Omega }} .                                                                    
\end{equation} 
As follows, there exists \eqref{ZEqnNum678806} such that $L_{2} \left(H\right)<\chi $ holds, and combining it with \eqref{ZEqnNum983488} verifies the relation of \eqref{ZEqnNum347354}.

To conclude the results, setting $\Omega $ in \eqref{ZEqnNum616872} with equality in \eqref{ZEqnNum288645} leads to $\delta _{K} <\chi $ with probability $\Pr \left(\delta _{K} <\chi \right)=1-\varepsilon $, as the $R$ value of measurement rounds is
\begin{equation} \label{ZEqnNum690736} 
\begin{split}
   R&=\tfrac{1}{c{{\chi }^{2}}}\left( \tfrac{4}{3}K\log \left( \tfrac{10n}{K} \right)+\tfrac{14}{3}K+\tfrac{4}{3}\log \left( 2\left( \tfrac{1}{\varepsilon } \right) \right) \right) \\ 
 & =\tfrac{2}{3c}\tfrac{1}{{{\chi }^{2}}}\left( K\left( 9+2\log \left( \tfrac{n}{K} \right) \right)+2\log \left( 2\left( \tfrac{1}{\varepsilon } \right) \right) \right).  
\end{split}
\end{equation} 

\end{proof}

Note that if $R$ is selected to be greater than \eqref{ZEqnNum690736}, the probability is increased to $\Pr \left(\delta _{K} <\chi \right)>1-\varepsilon $.

\subsection{Dense Measurement Rounds in Gate-Model Quantum Computers}
\subsubsection{Arbitrary Unitary Sequences}
The next theorem reveals that the number $R$ of dense measurement rounds can be used to determine $z^{*} $ with an error probability $\zeta =n^{-\log ^{3} \left(n\right)} $, such that $R$ depends only on the properties of the unitaries, while it does not depend directly on the actual $\zeta $. 
\begin{theorem}
(Dense measurements at a $U(\vec{\theta })$ quantum gate structure). For an arbitrary unitary $U_{B} $ in $U(\vec{\theta })=U_{B} U(\vec{\theta '})$ with $L_{0} \left(S\right)\le K$, the global optimal $z^{*} $ and estimate $C\left(z^{*} \right)$ can be determined via $R=\alpha Z^{2} K\log ^{4} \left(n\right)$ dense measurement rounds, with probability $\Pr \left(z^{*} \right)=\Pr \left(C\left(z^{*} \right)\right)=1-n^{-\log ^{3} \left(n\right)} $, where $Z\ge \sqrt{n} \mathop{\max }\limits_{k,q\in \left[n\right]} \left|U(\vec{\theta }_{q,k})\right|$, $U(\vec{\theta }_{q,k})$ is the $q$-th element of the $k$-th column of $U(\vec{\theta })$, while $\alpha >0$ is a constant. 
\end{theorem}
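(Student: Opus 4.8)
The plan is to specialize the restricted isometry estimate of Theorem 1 to the situation where the sampling is performed in a generic unitary basis, and to replace the crude subset union bound that produced \eqref{ZEqnNum690736} by the sharpened restricted isometry estimate for \emph{bounded orthonormal systems}. The matrix $U(\vec{\theta})$ is unitary, so each of its columns has unit $\ell^{2}$-norm; hence $\max_{k,q\in[n]}|U(\vec{\theta}_{q,k})|\ge 1/\sqrt{n}$ and the quantity $Z\ge\sqrt{n}\max_{k,q\in[n]}|U(\vec{\theta}_{q,k})|\ge 1$ is exactly the flatness (coherence) parameter of the orthonormal system generated by the rows of $U(\vec{\theta})$. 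Because the random measurement operator $M_{r}$ of \eqref{ZEqnNum688653} keeps each basis-component measurement independently with probability $1/2$, the $R$-round matrix ${\rm {\mathcal Q}}={\rm {\mathcal M}}U(\vec{\theta'})$ of \eqref{ZEqnNum527280} is, up to the fixed unitary change of basis $B$ absorbed in $z=B^{-1}{\rm {\mathcal P}}(Y)$ of \eqref{ZEqnNum667719}, the associated structured random sampling matrix of that system, with entries obeying $|{\rm {\mathcal Q}}_{j,k}|\le Z/\sqrt{n}$.

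First I would fix the recovery threshold: choose $\chi$ to be a universal constant strictly below the value for which the basis-pursuit $L_{1}$-minimization \eqref{ZEqnNum910643} in ${\rm {\mathcal P}}$ is guaranteed to reconstruct every $K$-sparse vector exactly (e.g. $\chi$ small enough that $\delta_{2K}<\sqrt{2}-1$ after replacing $K$ by $2K$). Then I would apply the restricted isometry bound for bounded orthonormal systems: there is a constant $\alpha>0$ such that if
\begin{equation}
R=\alpha Z^{2}K\log^{4}(n),
\end{equation}
then $\delta_{K}({\rm {\mathcal Q}})<\chi$ holds with probability at least $1-n^{-\log^{3}(n)}$. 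This is the step where the $\log(n/K)$ factor of \eqref{ZEqnNum690736} is upgraded to $\log^{4}(n)$ and the polynomially small failure probability $\varepsilon$ is upgraded to the superpolynomially small $n^{-\log^{3}(n)}$; it is obtained not by a union bound over the $\binom{n}{K}$ supports but by a Dudley-type chaining and symmetrization estimate for the supremum of the chaos process $\sup_{\Upsilon\subset[n],|\Upsilon|=K}L_{2}({\rm {\mathcal Q}}_{\Upsilon}^{*}{\rm {\mathcal Q}}_{\Upsilon}-I)$ appearing in \eqref{ZEqnNum763314}, using the entry bound $|{\rm {\mathcal Q}}_{j,k}|\le Z/\sqrt{n}$. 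I would then observe that the exponent $\log^{3}(n)$ in the failure probability is dictated by the chaining parameters and is independent of $\chi$ once $\chi$ is the fixed constant chosen above, which is exactly the claim that $R$ depends only on $Z$, $K$, $n$ and not directly on $\zeta$.

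Finally I would close the argument exactly as in Theorem 1: conditioned on the event $\{\delta_{K}({\rm {\mathcal Q}})<\chi\}$, the $L_{1}$-minimization \eqref{ZEqnNum910643} returns $\tilde S=S$ exactly; hence $\tilde\Lambda=\Lambda=U(\vec{\theta'})S$ by \eqref{ZEqnNum138005}, and \eqref{ZEqnNum667719} gives $z=B^{-1}{\rm {\mathcal P}}(Y)=U(\vec{\theta})\tilde X=z^{*}$, so $C(z)=C(z^{*})$. Since the only failure event is the restricted isometry failure, this yields $\Pr(z^{*})=\Pr(C(z^{*}))\ge 1-n^{-\log^{3}(n)}$, and any excess over the displayed value of $R$ only raises this probability. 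The main obstacle is the second step: establishing the $Z^{2}K\log^{4}(n)$ sample bound with $n^{-\log^{3}(n)}$ failure probability for the structured (Bernoulli-subsampled unitary) matrix ${\rm {\mathcal Q}}$, since, unlike the sub-Gaussian case treated in Theorem 1, the rows of ${\rm {\mathcal Q}}$ are not independent sub-Gaussian vectors and one must control a random chaos process rather than a single sub-Gaussian quadratic form; the remaining steps are bookkeeping and absorption of constants into $\alpha$.
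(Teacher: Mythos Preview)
Your overall strategy coincides with the paper's: identify $Z$ as the coherence parameter of the bounded orthonormal system generated by $U(\vec{\theta})$, invoke the BOS restricted-isometry estimate (chaining rather than a union bound over supports) to get $R=\alpha Z^{2}K\log^{4}(n)$ with failure probability $n^{-\log^{3}(n)}$, and then close via the $L_{1}$-recovery guarantee exactly as in Theorem~1.

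The one concrete step the paper carries out that you leave as an acknowledged obstacle is a sampling-model reduction. The paper does not apply the BOS bound directly to the Bernoulli-masked matrix ${\rm {\mathcal Q}}$; instead it rewrites ${\rm {\mathcal M}}={\rm {\mathcal P}}_{Q_{R}}(U(\vec{\theta}))$ as a row-subsampling projector with $Q_{R}\subset[n]$ a uniform random subset of size $R$, and then compares this to the model $Q'_{R}$ of $R$ rows drawn independently and uniformly (with replacement) from $[n]$. Using the monotonicity ${\rm {\mathcal E}}(\tilde Q)\subset{\rm {\mathcal E}}(Q)$ for $Q\subset\tilde Q$ and conditioning on $|Q'_{R}|=k$, the paper shows $\xi=\Pr({\rm {\mathcal E}}(Q_{R}))\le\Pr({\rm {\mathcal E}}(Q'_{R}))=\xi^{*}$, so it suffices to bound the failure probability in the i.i.d.\ model---which is precisely the setting in which the standard BOS RIP theorem (yielding \eqref{add2}) is stated. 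Your proposal instead treats the Bernoulli-subsampled matrix head-on and flags the chaos-process control as the hard part; the paper sidesteps that difficulty by this short monotonicity/coupling reduction and then quotes the i.i.d.\ BOS result as a black box. Adding that reduction would complete your argument along the paper's lines.
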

\begin{proof}
Let assume that $U(\vec{\theta })$ can be decomposed as $U_{B} U(\vec{\theta '})$, and the following bound can be formulated for the entries of $U(\vec{\theta })$, 
\begin{equation} \label{ZEqnNum109677} 
\mathop{\max }\limits_{k,q\in \left[n\right]} \left|U(\vec{\theta }_{q,k})\right|\le {\textstyle\frac{Z}{\sqrt{n} }} ,                                                             
\end{equation} 
where $U(\vec{\theta }_{q,k})$ is the $q$-th element of the $k$-th column of $U(\vec{\theta })$, and
\begin{equation} \label{74)} 
\left|U(\vec{\theta }_{q,k})\right|=\sqrt{U(\vec{\theta }_{q,k})(U(\vec{\theta }_{q,k}))^{*} } .                                                
\end{equation} 
Let assume that the size of $U(\vec{\theta })$ is $n\times n$, with columns $u_{k} $, $k=1,\ldots ,n$. Then let $v_{k} $ be the normalization of $u_{k} $ as
\begin{equation} \label{ZEqnNum586932} 
v_{k} =\sqrt{n} u_{k} ,                                                                        
\end{equation} 
where the normalized columns form an orthonormal system, and let $\varphi _{kl} $ be the inner product of two normalized columns $v_{k} $ and $v_{l} $, as
\begin{equation} \label{ZEqnNum934641} 
\varphi _{kl} =\left\langle {\textstyle\frac{1}{\sqrt{n} }} v_{k} ,{\textstyle\frac{1}{\sqrt{n} }} v_{l} \right\rangle =\left\langle u_{k} ,u_{l} \right\rangle ,                                                 
\end{equation} 
that can be rewritten as
\begin{equation} \label{77)} 
\varphi _{kl} ={\textstyle\frac{1}{n}} \sum _{q=1}^{n}\sqrt{n} U(\vec{\theta }_{q,k}) \sqrt{n} U^{\dag } (\vec{\theta }_{q,l})={\textstyle\frac{1}{n}} \sum _{q=1}^{n}\sqrt{n} u_{k,q}  \sqrt{n} u_{l,q}^{\dag } , 
\end{equation} 
where $u_{i,j} =U(\vec{\theta }_{j,i})$ and $v_{i,j} =\sqrt{n} U(\vec{\theta }_{j,i})$. Therefore, in \eqref{77)}, the sum operator runs over the $n$ elements of the $k$-th column of unitary $U(\vec{\theta })$, and the $n$ elements of the $l$-th column of $U^{\dag } (\vec{\theta })$, respectively.


Then, at $U_{B} $ and $U(\vec{\theta '})$, some argumentations on bounded orthonormal systems straightforwardly yields the boundedness condition \cite{ref23}
\begin{equation} \label{ZEqnNum139322} 
Z\ge \mathop{\max }\limits_{k,q\in \left[n\right]} \left|\left\langle b_{q} ,u'_{k} \right\rangle \right|, 
\end{equation} 
where $b_{q} $ is the $q$-th column of  $U_{B} $.

Then, for the maximal entry of $U(\vec{\theta }_{q,k})$, a bound can be established via the normalized columns, as
\begin{equation} \label{ZEqnNum493727} 
\begin{split}
   Z \ge \underset{k,q\in \left[ n \right]}{\mathop{\max }}\,\left| {{v}_{k,q}} \right|&=\underset{k,q\in \left[ n \right]}{\mathop{\max }}\,\left| \sqrt{n}{{u}_{k,q}} \right| \\ 
 & =\sqrt{n}\underset{k,q\in \left[ n \right]}{\mathop{\max }}\,\left| {{u}_{k,q}} \right|=\sqrt{n}\underset{k,q\in \left[ n \right]}{\mathop{\max }}\,\left| U( {{{\vec{\theta }}}_{q,k}}) \right|.  
\end{split}
\end{equation} 
As follows, the bounds in \eqref{ZEqnNum139322} and \eqref{ZEqnNum493727} are equivalent to \eqref{ZEqnNum109677}.

Then, by introducing a projector ${\rm {\mathcal P}}_{Q_{R} } $ that selects a subset of $U(\vec{\theta })$ in the $R$ rounds, the ${\rm {\mathcal M}}$ \eqref{ZEqnNum508923} measurement operator applied on a unitary $U(\vec{\theta })$ can be rewritten as 
\begin{equation} \label{ZEqnNum432717} 
{\rm {\mathcal M}}={\rm {\mathcal P}}_{Q_{R} } \left(U(\vec{\theta })\right),                                                               
\end{equation} 
where $Q_{R} \subset \left[n\right]$ is a subset of $R$ elements selected uniform at random from all subsets of $\left[n\right]$ of cardinality $R$, $\left|Q_{R} \right|=R$,
\begin{equation} \label{ZEqnNum869135} 
Q_{R} =\left\{q_{1} ,\ldots ,q_{R} \right\}.                                                               
\end{equation} 
As follows, the $Y^{R} $ \eqref{ZEqnNum395935} measurement result can be rewritten as
\begin{equation} \label{ZEqnNum330472} 
Y^{R} ={\rm {\mathcal P}}_{Q_{R} } \left(U(\vec{\theta })\right){\left| X \right\rangle} ={\rm {\mathcal P}}_{Q_{R} } \left(U(\vec{\theta '})U_{B} \right){\left| X \right\rangle} ={\rm {\mathcal P}}_{Q_{R} } \left(U(\vec{\theta '})\right){\left| S \right\rangle} .                     
\end{equation} 
It is required to verify that the $\xi $ error probability \eqref{ZEqnNum977667} at a projector ${\rm {\mathcal P}}_{Q} $ in \eqref{ZEqnNum432717} is bounded by an $\xi ^{*} $ error probability associated with the selection of rows uniformly and independently at random from $U(\vec{\theta })$ \cite{ref23}.

Thus, we define set $Q'_{R} \subset \left[n\right]$ with the same cardinality as \eqref{ZEqnNum869135}, such that its elements are selected independently and uniformly at random from $\left[n\right]$, $\left|Q'_{R} \right|=R$ 
\begin{equation} \label{84)} 
Q'_{R} =\left\{q'_{1} ,\ldots ,q'_{R} \right\}.                                                             
\end{equation} 
Then, let $Q_{k} \subset \left[n\right]$ be a subset of $k\le R$ selected uniform at random from all subsets of $\left[n\right]$ of cardinality $k$, $\left|Q_{k} \right|=k$,
\begin{equation} \label{85)} 
Q_{k} =\left\{q_{1} ,\ldots ,q_{k} \right\}.                                                              
\end{equation} 
For any subset $Q\in \left[n\right]$, we define a failure event ${\rm {\mathcal E}}\left(Q\right)$ as  
\begin{equation} \label{86)} 
{\rm {\mathcal E}}\left(Q\right)\equiv \left\{\tilde{S}\ne \mathop{\arg \min }\limits_{S} L_{1} \left(S\right),{\rm \; s.t.\; }Y={\rm {\mathcal P}}_{Q} \left(U(\vec{\theta '})\right){\left| S \right\rangle} ,{\rm \; for\; }\forall {\left| S \right\rangle} \right\},                
\end{equation} 
i.e., the event that the $L_{1} $-minimization (i.e,. a basis pursuit algorithm in ${\rm {\mathcal P}}$) allows no to determine every $S$ from \eqref{ZEqnNum330472} on $Q$ (Note that the success probability of an $L_{1} $-minimization in ${\rm {\mathcal P}}$ to determine $S$ is independent from the normalization of the measurement operator.). 

It can be verified, that for $Q\subset \tilde{Q}$, 
\begin{equation} \label{87)} 
{\rm {\mathcal E}}(\tilde{Q})\subset {\rm {\mathcal E}}\left(Q\right),                                                                 
\end{equation} 
and for $k\le R$, 
\begin{equation} \label{88)} 
\Pr \left({\rm {\mathcal E}}\left(Q_{R} \right)\right)=\xi \le \Pr \left({\rm {\mathcal E}}\left(Q_{k} \right)\right)=\xi _{k} ,                                                
\end{equation} 
and if $\left|Q'_{R} \right|=k$ holds for $k\le R$, then
\begin{equation} \label{89)} 
{\rm {\mathcal D}}\left(Q'_{R} \right)={\rm {\mathcal D}}\left(Q_{k} \right),                                                                
\end{equation} 
where ${\rm {\mathcal D}}\left(\cdot \right)$ is the distribution.  

Therefore, the 
\begin{equation} \label{90)} 
\Pr \left({\rm {\mathcal E}}\left(Q'_{R} \right)\right)=\xi ^{*}  
\end{equation} 
probability of event ${\rm {\mathcal E}}\left(Q'_{R} \right)$ at $\left|Q'_{R} \right|=k$ is as
\begin{equation} \label{91)} 
\begin{split}
   {{\xi }^{*}}&=\sum\limits_{k=1}^{R}{\Pr \left( \left. \mathcal{E}\left( {{{{Q'_{R}}}}} \right) \right|\left( \left| {{{{Q'_{R}}}}} \right|=k \right) \right)\Pr \left( \left| {{{{Q'_{R}}}}} \right|=k \right)} \\ 
 & =\sum\limits_{k=1}^{R}{{{\xi }_{k}}\Pr \left( \left| {{{{Q'_{R}}}}} \right|=k \right)\ge \xi }\sum\limits_{k=1}^{R}{\Pr \left( \left| {{{{Q'_{R}}}}} \right|=k \right)} \\ 
 & =\xi ,  
\end{split}
\end{equation} 
thus the error probability $\xi $ is bounded by $\xi ^{*} $. 

As follows, using projector ${\rm {\mathcal P}}_{Q} $ in \eqref{ZEqnNum432717}, an $L_{1} $-minimization (basis pursuit) \cite{ref20,ref21,ref22,ref23} in the ${\rm {\mathcal P}}$ post-processing phase allows to determine the global optimal $z^{*} $ from \eqref{ZEqnNum330472}, with probability 
\begin{equation} \label{ZEqnNum219291} 
\Pr \left(z^{*} \right)=1-n^{-\log ^{3} \left(n\right)},
\end{equation} 
as 
\begin{equation} \label{ZEqnNum602021} 
R=\alpha Z^{2} K\log ^{4} \left(n\right) 
\end{equation} 
holds, where $\alpha >0$, and $\xi $ is evaluated via \eqref{ZEqnNum977667} as 
\begin{equation} \label{ZEqnNum846366} 
\xi =\Pr \left(z\ne z^{*} \right)=n^{-\log ^{3} \left(n\right)} .                                               
\end{equation} 

The value determined for $R$ in \eqref{ZEqnNum602021} is based on the following fact. It can be shown \cite{ref23}, that at a particular $\alpha >0$ and $Z \ge 1$, there exists a constant $a$, $a\in \left( 0,1 \right)$, such that for a given $a$, the relation 
\begin{equation}
\delta _{K} \le a 
\end{equation}
holds with probability 
\eqref{ZEqnNum219291}, if
\begin{equation} \label{add2} 
R\ge \tfrac{1}{{{a}^{2}}}\alpha {{Z}^{2}}K{{\log }^{4}}\left( n \right),
\end{equation} 
where 
\begin{equation} 
\delta _{K} =\delta _{K} \left({\tfrac{1}{\sqrt{R}}}{\rm {\mathcal Q}}\right)
\end{equation} 
 is the $K$-th restricted isometry constant of ${\tfrac{1}{\sqrt{R}}}{\rm {\mathcal Q}}$, that is the smallest $a \ge 0$ such that 
\begin{equation} \label{add3)} 
\left(1-a\right)\ell ^{2} \left({\left| S \right\rangle} \right)\le \left(L_{2} \left({\tfrac{1}{\sqrt{R}}}{\rm {\mathcal Q}}){\left| S \right\rangle} \right)\right)^{2} \le \left(1+a\right)\ell ^{2} \left({\left| S \right\rangle} \right)
\end{equation} 
while ${\rm {\mathcal Q}}$ is as given by \eqref{ZEqnNum527280}; holds for $\forall S$, with $L_{0} \left(S\right)\le K$. Since for \eqref{ZEqnNum602021}, $\alpha >0$ holds and $Z \ge 1$ is satisfied via \eqref{ZEqnNum493727}, the result in \eqref{ZEqnNum602021} is straightforwardly follows from in \eqref{add2} at
\begin{equation} 
a=1. 
\end{equation}

Thus, at \eqref{ZEqnNum602021} dense measurement rounds the global optimal output $z^{*} $
\begin{equation} \label{ZEqnNum735912} 
z^{*} =B^{-1} \left(\Lambda ^{*} \right),                                                                   
\end{equation} 
is yielded with probability 
\begin{equation} \label{96)} 
\Pr \left(z^{*} \right)=1-\xi ,                                                                  
\end{equation} 
where $\Lambda ^{*} $ is the optimal $\tilde{\Lambda }$ determined via ${\rm {\mathcal P}}$. Therefore the $C\left(z^{*} \right)$ global optimal estimate of a particular objective function $C$ can be determined with probability $\Pr \left(C\left(z^{*} \right)\right)=\Pr \left(z^{*} \right)$ as
\begin{equation} \label{ZEqnNum742810} 
C\left(z^{*} \right)=\mathop{\max }\limits_{\forall m} C\left(z_{m} \right),                                                                  
\end{equation} 
where $C\left(z_{m} \right)$ is the estimate yielded in an $m$-th round,  $m=1,\ldots ,R$, $z_{m} $ is the recovered output vector in the $m$-th round, which concludes the proof. 
\end{proof}

The steps of the dense measurement for an arbitrary $U(\vec{\theta })$ are summarized in Procedure 1.

 \setcounter{algocf}{0}
\begin{proced}
  \DontPrintSemicolon
\caption{Dense Measurements at arbitrary unitary $U(\vec{\theta })$}

\textbf{Step 1}. Set the superposed input system ${\left| X \right\rangle} $ \eqref{ZEqnNum910801} and unitary sequence $U(\vec{\theta })$ \eqref{ZEqnNum489626} of $QG$.

\textbf{Step 2}. Select a computational basis $B$ for unitary $U_{B} $ to set ${\left| S \right\rangle} =U_{B} {\left| X \right\rangle} $, such that $L_{0} \left(S\right)\le K$ holds for the $L_{0} $-norm of $S=BX$.

\textbf{Step 3}. At a unitary $U_{B} $, set the gate parameter vector $\vec{\theta '}$ in $U(\vec{\theta '})$, such that $U(\vec{\theta })=U_{B} U(\vec{\theta '})$.

\textbf{Step 4}. Measure ${\left| G \right\rangle} =U(\vec{\theta '}){\left| S \right\rangle} $ via $M_{r} $ to get $Y=M_{r} \left(U(\vec{\theta '}){\left| S \right\rangle} \right)$.

\textbf{Step 5}. Apply ${\rm {\mathcal P}}$ post-processing to determine $\tilde{S}$ via \eqref{ZEqnNum910643} and $\tilde{\Lambda }=U(\vec{\theta '})\tilde{S}$ in \eqref{ZEqnNum138005}. 

\textbf{Step 6}. Output $z$ as in \eqref{ZEqnNum667719}, where $B^{-1} $ is the inverse of $B$.

\textbf{Step 7}. Apply steps 1-6 through $R$ rounds \eqref{ZEqnNum602021}, to achieve error probability $\xi $ \eqref{ZEqnNum846366} in the determination of $z^{*} $ \eqref{ZEqnNum735912} and the global optimal objective function $C\left(z^{*} \right)$ \eqref{ZEqnNum742810}.
\end{proced} 

\fref{fig1} depicts an application of dense quantum measurement in quantum computations. \fref{fig1}(a) shows an arbitrary $QG$ quantum circuit with a $U(\vec{\theta })$ sequence (where $\vec{\theta }$ sets the unitaries of the quantum circuit) and standard measurement $M$ with $R_{0} $ measurement rounds and success probability $\Pr _{R_{0} } \left(z^{*} \right)$. \fref{fig1}(b) shows dense quantum measurement in a general case $U(\vec{\theta })$, such that the $QG$ structure is prepared to realize $U(\vec{\theta })=U_{B} U(\vec{\theta '})$, where unitary $U_{B} $ sets a computational basis as ${\left| S \right\rangle} =U_{B} {\left| X \right\rangle} $ with relation $L_{0} \left(S\right)$ for the $L_{0} $-norm of $S$ ($S$ is a classical value resulting from the measurement), while $U(\vec{\theta '})$ is the actual setting of the unitaries of $QG$ to provide output ${\left| G \right\rangle} =U(\vec{\theta '}){\left| S \right\rangle} $, such that $U(\vec{\theta '}){\left| S \right\rangle} =U(\vec{\theta }){\left| X \right\rangle} $. The output is measured via a randomized measurement $M_{r} $. The measurement result is then post-processed via unit ${\rm {\mathcal P}}$ to achieve $R{\rm \ll }R_{0} $ and $\Pr \left(z^{*} \right){\rm \gg }\Pr _{R_{0} } \left(z^{*} \right)$.

\begin{center}
\begin{figure*}[!htbp]
\begin{center}
\includegraphics[angle = 0,width=1\linewidth]{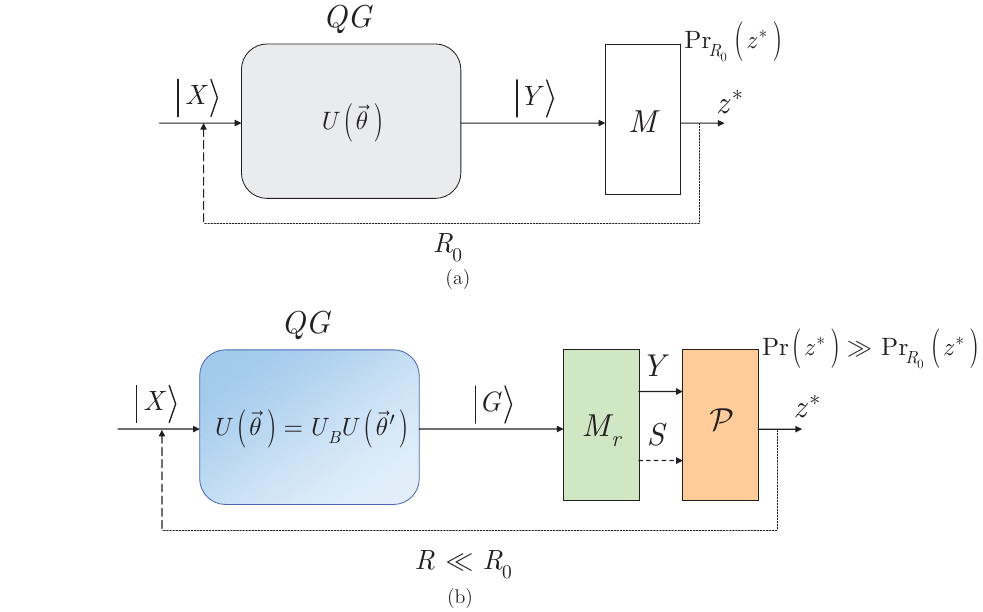}
\caption{(a). A standard quantum measurement setting. The ${\left| X \right\rangle} $ superposed input quantum system is fed into the $QG$ quantum gate structure. The $QG$ structure realizes the unitary $U(\vec{\theta })$ and outputs ${\left| Y \right\rangle} =U(\vec{\theta }){\left| X \right\rangle} $. The output is measured by a standard (projective) measurement operator $M$ in a computational basis, which results in an optimal output $z^{*} $ after $R_{0} $ measurement rounds (depicted by the dashed line) with a success probability of $\Pr _{R_{0} } \left(z^{*} \right)$. (b): The dense quantum measurement procedure. The ${\left| X \right\rangle} $ superposed input quantum system is fed into the $QG$ quantum gate structure. The $QG$ structure is set to realize the unitary $U(\vec{\theta })=U_{B} U(\vec{\theta '})$. The $M_{r} $ randomized measurement yields $Y=M_{r} \left(U(\vec{\theta '}){\left| S \right\rangle} \right)$, which is post-processed along with $S=BX$ via unit ${\rm {\mathcal P}}$. Unit ${\rm {\mathcal P}}$ performs an $L_{1} $-norm minimization and outputs the global optimal $z^{*} $ after $R{\rm \ll }R_{0} $ rounds (depicted by the dashed line), with a high success probability of $\Pr \left(z^{*} \right){\rm \gg }\Pr _{R_{0} } \left(z^{*} \right)$.} 
 \label{fig1}
 \end{center}
\end{figure*}
\end{center}

\subsubsection{Dense Measurements of Computational Basis Quantum States}
The next theorem reveals that the number of dense measurement rounds can be reduced if the unitaries of the quantum circuit are set as $U(\vec{\theta })=U_{B} $, i.e., if the output of the quantum circuit is a computational basis state ${\left| S \right\rangle} =U_{B} {\left| X \right\rangle} $.
\begin{theorem}
(Number of dense measurement rounds at $U(\vec{\theta })=U_{B} $). At $U(\vec{\theta })=U_{B} $, the optimal $z^{*} $ and $C\left(z^{*} \right)$ can be determined via $R\left( \xi  \right)={{c}_{1}}K\log \left( {10n}/{K}\right)+{{c}_{2}}\log \left( {2}/{\xi }\right)$ dense measurement rounds, where $\xi \in \left(0,1\right)$ is the error probability of $z^{*} $, while $c_{1} >0$ and $c_{2} >0$ are constants, that yields $R={\rm {\mathcal O}}\left(K\log \left({\textstyle\frac{10n}{K}} \right)\right)$, with $\xi =2\exp \left(-R\right)$.
\end{theorem}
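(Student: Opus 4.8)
The plan is to specialize the machinery already built in the proof of Theorem~1 to the degenerate case $U(\vec{\theta '})=I$, which is exactly what the hypothesis $U(\vec{\theta })=U_{B}$ forces. In that case $\Lambda =U(\vec{\theta '})S=S$, $\beta '_{C}=\beta _{C}$, and the matrix ${\rm {\mathcal Q}}={\rm {\mathcal M}}U(\vec{\theta '})={\rm {\mathcal M}}$ is literally the $n\times R$ random Bernoulli measurement matrix assembled from the independent operators $M_{r}^{(1)},\ldots ,M_{r}^{(R)}$ of \eqref{ZEqnNum252234}. Hence the sub-Gaussian concentration estimate and the restricted-isometry bound \eqref{ZEqnNum487772}--\eqref{ZEqnNum690736} from the proof of Theorem~1 apply verbatim to ${\rm {\mathcal M}}$ itself, and the output \eqref{ZEqnNum667719} collapses to $z=B^{-1}({\rm {\mathcal P}}(Y))=B^{-1}(\tilde S)$, so that $z=z^{*}$ precisely when the $L_{1}$-minimization in ${\rm {\mathcal P}}$ recovers the $K$-sparse vector $S$ exactly.

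Next I would fix the isometry bound $\chi $ appearing in Theorem~1 to a universal constant small enough that $\delta _{K}<\chi $ (strictly, the order-$2K$ constant $\delta _{2K}<\chi $) forces the basis-pursuit program \eqref{ZEqnNum910643}--\eqref{ZEqnNum128831} to recover every $K$-sparse $S$ exactly; this is the standard $L_{1}$-recovery-from-RIP statement of compressed sensing \cite{ref20,ref21,ref22,ref23}. With $\chi $ now an absolute constant, \eqref{ZEqnNum690736} becomes $R=\tfrac{1}{c\chi ^{2}}\big(\tfrac{4}{3}K\log(10n/K)+\tfrac{14}{3}K+\tfrac{4}{3}\log(2/\varepsilon )\big)$; since $K{\rm \ll }n$ gives $\log(10n/K)\ge \log 10>1$, the term $\tfrac{14}{3}K$ is absorbed into $K\log(10n/K)$, and collecting constants yields $R(\xi )=c_{1}K\log(10n/K)+c_{2}\log(2/\xi )$ with $c_{1},c_{2}>0$ absolute constants, where $\xi =\varepsilon $ is exactly the failure probability $\Pr(z\ne z^{*})$ of \eqref{ZEqnNum977667}. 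This proves the first assertion of the theorem.

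To read off the self-consistent form $\xi =2\exp(-R)$ and the bound $R={\rm {\mathcal O}}(K\log(10n/K))$, note that once $\chi $, $\Omega $ and the sub-Gaussian constant $c$ are all fixed, the tail in \eqref{ZEqnNum347354} is $2\exp(-c'R)$ with $c'$ an absolute constant, so after the union bound \eqref{ZEqnNum616872} the residual failure probability is, up to renaming constants, $\xi =2\exp(-R)$. In the regime $K{\rm \ll }n$ the RIP-threshold contribution $c_{1}K\log(10n/K)$ is the dominant term in $R(\xi )$, whence $R=c_{1}K\log(10n/K)+c_{2}\log(2/\xi )={\rm {\mathcal O}}(K\log(10n/K))$; equivalently, substituting $\log(2/\xi )=R$ into $R(\xi )$ gives $R(1-c_{2})=c_{1}K\log(10n/K)$, i.e. $R=\tfrac{c_{1}}{1-c_{2}}K\log(10n/K)$ for $c_{2}<1$. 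Consequently, at this many rounds the global optimal output $z^{*}=B^{-1}(S)$ and the global optimal estimate $C(z^{*})=\max _{\forall m}C(z_{m})$ are recovered with probability $1-\xi =1-2\exp(-R)$, which is the claim.

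The main obstacle is the bridge between the generic isometry bound of Theorem~1 and the \emph{exact} $L_{1}$-recovery guarantee: exact basis-pursuit recovery of all $K$-sparse vectors needs the restricted isometry property of order $2K$ with a specific numerical constant, so one must re-run the union-bound counting \eqref{ZEqnNum616872}--\eqref{ZEqnNum690736} with $2K$ (hence $\binom{n}{2K}$ subsets) in place of $K$; this only rescales $c_{1},c_{2}$ and leaves the $K\log(10n/K)$ order intact. A secondary point to verify carefully is that the hypothesis $K{\rm \ll }n$ is genuinely used, so that $9+2\log(n/K)$ and $\log(10n/K)$ are of the same order and the constant-absorption steps are legitimate; for $K$ comparable to $n$ the stated ${\rm {\mathcal O}}$-form would degrade.
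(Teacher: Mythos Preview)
Your proposal is correct and follows essentially the same route as the paper: specialize to $U(\vec{\theta '})=I$ so that ${\rm {\mathcal Q}}={\rm {\mathcal M}}$ is the raw sub-Gaussian Bernoulli matrix, invoke the RIP bound for sub-Gaussian matrices to get $R=\lambda \chi ^{-2}\big(K\log(10n/K)+\log(2/\varepsilon )\big)$, then freeze $\chi$ at the universal constant dictated by the exact basis-pursuit recovery condition $\delta _{2K}<\tfrac{1}{3}$ to obtain $R(\xi )=c_{1}K\log(10n/K)+c_{2}\log(2/\xi )$ and the asymptotic form. The only presentational difference is that you derive the RIP round count by re-running the union-bound computation \eqref{ZEqnNum616872}--\eqref{ZEqnNum690736} from Theorem~1 and then absorbing the $\tfrac{14}{3}K$ term, whereas the paper simply restates the sub-Gaussian RIP estimate directly as \eqref{ZEqnNum898189} with a single constant $\lambda$; your remark about needing order $2K$ rather than $K$ in the union bound is exactly the content of the paper's condition \eqref{ZEqnNum306145}.
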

\begin{proof}
In this setting, the unitaries of the $QG$ quantum gate structure are set such that 
\begin{equation} \label{ZEqnNum171293} 
U(\vec{\theta })=U_{B}  
\end{equation} 
holds, therefore the output of $QG$ at an $n$-length input ${\left| X \right\rangle} $ is 
\begin{equation} \label{99)} 
U_{B} {\left| X \right\rangle} ={\left| S \right\rangle} ,                                                                 
\end{equation} 
i.e., it outputs an $n$-length computational basis quantum state ${\left| S \right\rangle} $, with relation $L_{0} \left(S\right)\le K$, $K{\rm \ll }n$.

The $R$ measurements are performed according to the $n\times R$ measurement matrix ${\rm {\mathcal M}}$ as defined in \eqref{ZEqnNum508923}.

Since $U(\vec{\theta })$ is set as given in \eqref{ZEqnNum171293}, the measurement results of the $R$ rounds formulate $R$ dimensional output $Y^{R} =\left(Y^{\left(1\right)} ,\ldots ,Y^{\left(R\right)} \right)$ as
\begin{equation} \label{100)} 
Y^{R} ={\rm {\mathcal M}}{\left| S \right\rangle} ,                                                                  
\end{equation} 
where the output of the $m$-th round is 
\begin{equation} \label{ZEqnNum485240} 
Y^{\left(m\right)} =M_{r}^{\left(m\right)} {\left| S \right\rangle} =\beta _{C}^{\left(m\right)} S, 
\end{equation} 
where $M_{r}^{\left(m\right)} $ is as given in \eqref{ZEqnNum252234}, while $\beta _{C}^{\left(m\right)} $ is as in \eqref{ZEqnNum942401}.

It further can be verified that ${\rm {\mathcal M}}$ is a sub-Gaussian random matrix, thus there exists a constant $\lambda >0$ depending only on the $C_{1} ,C_{2} $ sub-Gaussian parameters (see \sref{subg}) of ${\rm {\mathcal M}}$ such that for a given $\chi $, the restricted isometry constant \cite{ref20,ref21,ref22,ref23} $\delta _{K} =\delta _{K} \left({\rm {\mathcal M}}\right)$ of ${\rm {\mathcal M}}$, satisfies relation $\delta _{K} <\chi $ with probability 
\begin{equation} \label{ZEqnNum134847} 
\Pr \left(\delta _{K} <\chi \right)=1-\varepsilon ,                                                             
\end{equation} 
if 
\begin{equation} \label{ZEqnNum898189} 
R=\lambda {\textstyle\frac{1}{\chi ^{2} }} \left(K\log \left({10n/K} \right)+\log \left({2 \mathord{\left/{\vphantom{2 \varepsilon }}\right.\kern-\nulldelimiterspace} \varepsilon } \right)\right),                                                 
\end{equation} 
where $\delta _{K} =\delta _{K} \left({\rm {\mathcal M}}\right)$ is the $K$-th restricted isometry constant of ${\rm {\mathcal M}}$, which is the smallest $\chi \ge 0$ such that
\begin{equation} \label{104)} 
\left(1-\chi \right)\ell ^{2} \left({\left| S \right\rangle} \right)\le \left(L_{2} \left({\rm {\mathcal M}}{\left| S \right\rangle} \right)\right)^{2} \le \left(1+\chi \right)\ell ^{2} \left({\left| S \right\rangle} \right),                       
\end{equation} 
for $\forall S$, with $L_{0} \left(S\right)\le K$. Note, that at $\varepsilon =2\exp \left(-{\textstyle\frac{R}{\chi ^{2} 2\lambda }} \right)$, \eqref{ZEqnNum898189} picks up the value of $R=2\lambda {\textstyle\frac{1}{\chi ^{2} }} \left(K\log \left({\textstyle\frac{10n}{K}} \right)\right)$. 

Then, some argumentation on the $L_{1} $-minimization based recovery via basis pursuit in the ${\rm {\mathcal P}}$ unit, yields a condition for the $2K$-th restricted isometry constant, $\delta _{2K} $ of ${\rm {\mathcal M}}$ as
\begin{equation} \label{ZEqnNum306145} 
\delta _{2K} <{\textstyle\frac{1}{3}} .                                                                        
\end{equation} 
The condition in \eqref{ZEqnNum306145}  allows to determine any $\tilde{S}$ in the ${\rm {\mathcal P}}$ post-processing as a unique solution of
\begin{equation} \label{ZEqnNum744529} 
\tilde{S}=\mathop{\arg \min }\limits_{S} L_{1} \left(S\right) 
\end{equation} 
subject to 
\begin{equation} \label{107)} 
Y^{R} =\beta _{C}^{R} S,                                                                        
\end{equation} 
with success probability
\begin{equation} \label{108)} 
\Pr (\tilde{S})=1-\varepsilon =1-\xi ,                                                            
\end{equation} 
where $\beta _{C}^{R} $ is as
\begin{equation} \label{109)} 
\beta _{C}^{R} =\left(\beta _{C}^{\left(1\right)} ,\ldots ,\beta _{C}^{\left(R\right)} \right),                                                              
\end{equation} 
such that for every $S$ there exists a unique solution of  \eqref{ZEqnNum744529}. 

Thus, in an $m$-th measurement round, output vector $z_{m} $ is evaluated as
\begin{equation} \label{ZEqnNum894867} 
\begin{split}
   {{z}_{m}}&=\mathcal{P}\left( {{Y}^{\left( m \right)}} \right) \\ 
 & =\tilde{S} \\ 
 & ={{U}_{B}}\tilde{X} \\ 
 & =U( {\vec{\theta }})\tilde{X},  
\end{split}
\end{equation} 
where $Y^{\left(m\right)} $ is given in \eqref{ZEqnNum485240}.

Since \eqref{ZEqnNum306145} puts a strict bound on $\chi $, it allows us to rewrite \eqref{ZEqnNum898189} at a particular 
\begin{equation} \label{111)} 
\xi =\varepsilon  
\end{equation} 
as
\begin{equation} \label{ZEqnNum810292} 
R\left(\xi \right)=c_{1} K\log \left({\textstyle\frac{10n}{K}} \right)+c_{2} \log \left({\textstyle\frac{2}{\xi }} \right),                                              
\end{equation} 
where $c_{1} ,c_{2} >0$ are constants depend only on the $C_{1} ,C_{2} $ sub-Gaussian parameters (see \sref{subg}) of ${\rm {\mathcal M}}$. 

Therefore, the global optimal $z^{*} $ can be determined via $R$ rounds via ${\rm {\mathcal P}}$, without loss of generality as
\begin{equation} \label{ZEqnNum614290} 
R={\rm {\mathcal O}}\left(\gamma K\log \left({\textstyle\frac{10n}{K}} \right)\right),                                                        
\end{equation} 
where $\gamma >0$ is a constant, that yields $C\left(z^{*} \right)$ via \eqref{ZEqnNum742810}. Thus, at $\gamma =1$, the success probability is
\begin{equation} \label{114)} 
\begin{split}
   \Pr \left( {{z}^{*}} \right)&=\Pr \left( C\left( {{z}^{*}} \right) \right) \\ 
 & =1-2\exp \left( -R \right)=1-\xi ,  
\end{split}
\end{equation} 
that concludes the proof.
\end{proof}

The steps of the dense measurement for an computational basis quantum states are summarized in Procedure 2.

\begin{proced}
  \DontPrintSemicolon
\caption{Dense measurements at $U(\vec{\theta })=U_{B} $}

\textbf{Step 1}. Set the superposed input system ${\left| X \right\rangle} $ \eqref{ZEqnNum910801} and unitary sequence $U(\vec{\theta })$ \eqref{ZEqnNum489626} of $QG$.

\textbf{Step 2}. Select a computational basis $B$ for unitary $U_{B} $ to set ${\left| S \right\rangle} =U_{B} {\left| X \right\rangle} $, such that $L_{0} \left(S\right)\le K$ holds for the $L_{0} $-norm of $S=BX$.

\textbf{Step 3}. For a unitary $U_{B} $, set the gate parameter vector $\vec{\theta }$ in $U(\vec{\theta })$, such that $U(\vec{\theta })=U_{B} $.

\textbf{Step 4}. Measure ${\left| S \right\rangle} =U_{B} {\left| X \right\rangle} $ via $M_{r} $ to get $Y=M_{r} \left({\left| S \right\rangle} \right)$.

\textbf{Step 5}. Apply ${\rm {\mathcal P}}$ post-processing to determine $\tilde{S}$ via \eqref{ZEqnNum744529}. 

\textbf{Step 6}. Output $z$ as in \eqref{ZEqnNum894867}.

\textbf{Step 7}. Apply steps 1-6 through $R$ rounds \eqref{ZEqnNum602021}, to achieve error probability $\xi $ \eqref{ZEqnNum846366} in the estimation of the global optimal output $z^{*} $ and objective function $C\left(z^{*} \right)$.
\end{proced} 

\section{Performance Evaluation}
\label{sec5}
In this section, we analyze the $\Pr \left(C\left(z^{*} \right)\right)=\Pr \left(z^{*} \right)$ success probabilities of finding the global optimal output $z^{*} $, and global optimal estimate $C\left(z^{*} \right)$ in function of $R$, for an arbitrary objective function of the quantum computer with a $QG$ quantum circuit, and arbitrary objective function $C$. First the $U(\vec{\theta })=U_{B} U(\vec{\theta '})$ setting is discussed, then the $U(\vec{\theta })=U_{B} $ situation is proposed.

In \fref{fig3}, a dense measurement at the $U(\vec{\theta })=U_{B} U(\vec{\theta '})$ case is depicted. In this case, $R$ is evaluated as given in \eqref{ZEqnNum602021}. In \fref{fig3}(a) the length of the measured quantum system is fixed to $n=1000$, while $K$ varies between $5$ and $20$. In \fref{fig3}(b), the value of $K$ is fixed to $K=10$, while $n$ varies between $n=10^{1} $ and $n=10^{6} $. For a comparison the results of $R_{0} $ standard measurements \cite{ref10} are also depicted in both figures with dashed gray lines ($R_{0} =100$, $\Pr _{R_{0} } \left(C\left(z^{*} \right)\right)=\Pr _{R_{0} } \left(z^{*} \right)\ge 0.01$).

\begin{center}
\begin{figure*}[!h]
\begin{center}
\includegraphics[angle = 0,width=1\linewidth]{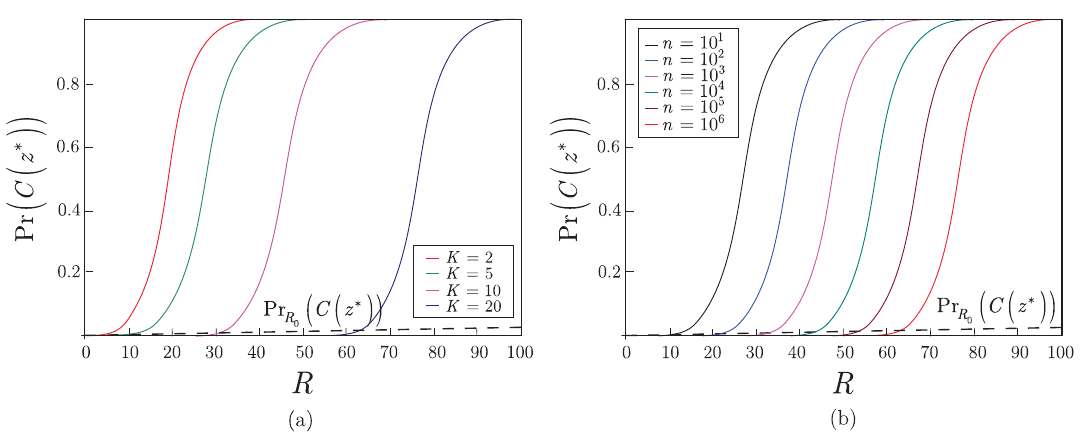}
\caption{(a): The $\Pr \left(C\left(z^{*} \right)\right)$ success probabilities for $U(\vec{\theta })=U_{B} U(\vec{\theta '})$, in function of the dense measurement rounds $R$, $R=\alpha Z^{2} K\log ^{4} \left(n\right)$, at $n=1000$, and $K=2,5,10,20$, $\alpha _{K=2} =2.47\times 10^{-4} $, $\alpha _{K=5} =1.36\times 10^{-4} $, $\alpha _{K=10} =8.46\times 10^{-5} $, $\alpha _{K=20} =6.17\times 10^{-5} $, and $Z=\sqrt{n} $. (b): The $\Pr \left(C\left(z^{*} \right)\right)$ success probabilities for $U(\vec{\theta })=U_{B} U(\vec{\theta '})$, in function of $R$ at $K=10$, for $n=10^{1} ,10^{2} ,10^{3} ,10^{4} ,10^{5} ,10^{6} $.} 
 \label{fig3}
 \end{center}
\end{figure*}
\end{center} 

In \fref{fig4}, a $U(\vec{\theta })=U_{B} $ situation is depicted. In this case, a computational basis quantum state is outputted by the $QG$ structure, and $R$ is evaluated as given in \eqref{ZEqnNum614290}. In \fref{fig4}(a) the length of the measured quantum system is fixed to $n=1000$, while $K$ varies between $5$ and $20$. In \fref{fig4}(b), the value of $K$ is fixed to $K=10$, while $n$ varies between $n=10^{1} $ and $n=10^{6}$.

\begin{center}
\begin{figure*}[!h]
\begin{center}
\includegraphics[angle = 0,width=1\linewidth]{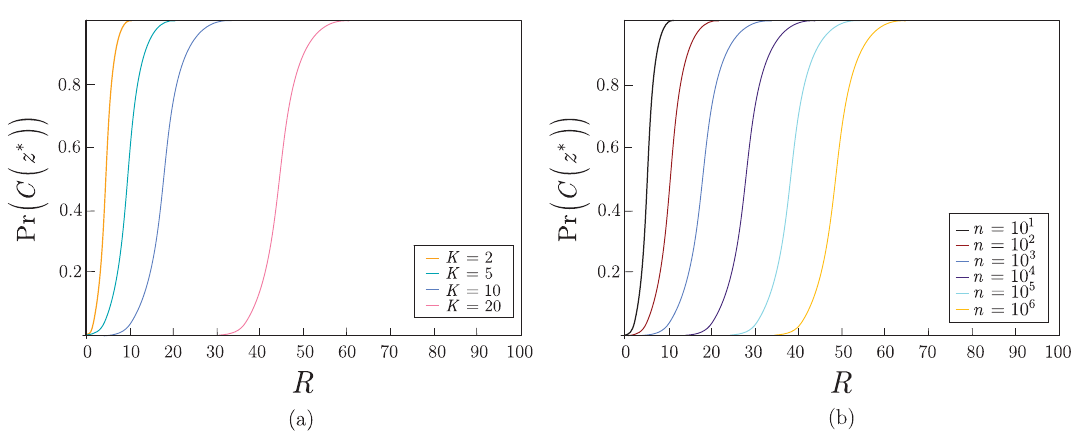}
\caption{(a): The $\Pr \left(C\left(z^{*} \right)\right)$ success probabilities for computational basis quantum states, $U(\vec{\theta })=U_{B} $, in function of the measurement rounds $R$, $R=\gamma K\log \left({\textstyle\frac{10n}{K}} \right)$, at $n=1000$, $K=2,5,10,20$ and $\gamma =1$. (b): The $\Pr \left(C\left(z^{*} \right)\right)$ success probabilities for computational basis quantum states, $U(\vec{\theta })=U_{B} $, in function of $R$ at $K=10$, for $n=10^{1} ,10^{2} ,10^{3} ,10^{4} ,10^{5} ,10^{6} $.} 
 \label{fig4}
 \end{center}
\end{figure*}
\end{center}

\section{Conclusions}
\label{sec6}
Here, we defined a novel measurement technique called dense measurement for quantum computation. Dense measurement utilizes a random measurement strategy and a post-processing unit to eliminate the main drawbacks of standard measurement techniques. The dense measurement method provides two fundamental results. First, it significantly increases the success probability of finding a global optimal measurement result. Second, it radically reduces the number of measurement rounds required to determine a global optimal measurement result. We demonstrated the results through an application of dense measurements with quantum circuits that realize arbitrary unitary operations. We proved the results of dense measurement theory for the measurement of arbitrary quantum states and for the measurement of computational basis quantum states in gate-model quantum computer environment.

\section*{Acknowledgements}
This work was partially supported by the National Research Development and Innovation Office of Hungary (Project No. 2017-1.2.1-NKP-2017-00001), by the Hungarian Scientific Research Fund - OTKA K-112125 and in part by the BME Artificial Intelligence FIKP grant of EMMI (BME FIKP-MI/SC).


\newpage
\appendix
\setcounter{table}{0}
\setcounter{figure}{0}
\setcounter{equation}{0}
\setcounter{algocf}{0}
\renewcommand{\thetable}{\Alph{section}.\arabic{table}}
\renewcommand{\thefigure}{\Alph{section}.\arabic{figure}}
\renewcommand{\theequation}{\Alph{section}.\arabic{equation}}
\renewcommand{\thealgocf}{\Alph{section}.\arabic{algocf}}

\setlength{\arrayrulewidth}{0.1mm}
\setlength{\tabcolsep}{5pt}
\renewcommand{\arraystretch}{1.5}
\section{Appendix}
\subsection{Abbreviations}
\begin{description}
\item[NISQ] Noisy Intermediate-Scale Quantum
\item[QADI] Quantum Adiabatic Algorithm
\item[QAOA] Quantum Approximate Optimization Algorithm
\item[QG] Quantum Gate structure of a quantum circuit
\item[POVM] Positive-Operator Valued Measure
\end{description}

\subsection{Notations}
The notations of the manuscript are summarized in  \tref{tab2}.
\begin{center}
\begin{longtable}{||l|p{4.5in}||}
\caption{Summary of notations.}
\label{tab2}
\endfirsthead
\endhead
\hline
\textit{Notation} & \textit{Description} \\ \hline
$QG$  & Quantum  gate structure of a quantum circuit. \\ \hline 
$M$ & Standard measurement operator. \\ \hline 
$M_{r} $ & Measurement operator in the dense measurement procedure. \\ \hline 
$d$ & Dimension of the quantum system. \\ \hline 
$R_{0} $ & Measurement rounds at a standard measurement $M$. \\ \hline 
$R$ & Measurement rounds of dense measurements, $M_{r} =\left(b_{1} M_{B} ,\ldots ,b_{n} M_{B} \right)^{T} $, where $b_{i} $ is a random variable, $b_{i} \in \left\{0,1\right\}$, ${\rm Pr}\left(0\right)={\rm Pr}\left(1\right)=0.5$, associated with the measurement of the $i$-th quantum state of the output quantum system, while $M_{B} $ is a quantum measurement in the computational basis $B$, $b_{i} M_{B} =0$ if $b_{i} =0$, and $b_{i} M_{B} =M_{B} $ if $b_{i} =1$. \\ \hline 
$z$ & Measurement output. \\ \hline 
$z^{*} $ & Global optimal measurement output. \\ \hline 
$L$ & Number of unitary gates in the $QG$ quantum circuit. \\ \hline 
$U_{i} \left(\theta _{i} \right)$ & An $i$-th unitary gate, $U_{i} \left(\theta _{i} \right)=\exp \left(-i\theta _{i} P\right)$, where $P$ is a generalized Pauli operator formulated by a tensor product of Pauli operators $\left\{\sigma _{X} ,\sigma _{Y} ,\sigma _{Z} \right\}$, while $\theta _{i} $ is referred to as the gate parameter associated to $U_{i} \left(\theta _{i} \right)$. \\ \hline 
$U(\vec{\theta })$ & A unitary operation realized via the quantum circuit, $U(\vec{\theta })=U_{L} \left(\theta _{L} \right)U_{L-1} \left(\theta _{L-1} \right)\ldots U_{1} \left(\theta _{1} \right)$, where $U_{i} \left(\theta _{i} \right)$ identifies an $i$-th unitary gate. \\ \hline 
$\vec{\theta }$ & Gate parameter vector, a collection of gate parameters of the $L$ unitaries, $\vec{\theta }=\left(\theta _{1} ,\ldots ,\theta _{L-1} ,\theta _{L} \right)^{T} $. \\ \hline 
$C$ & Classical objective function of a computational problem fed into the quantum computer.  \\ \hline 
$P$ & Generalized Pauli operator formulated by the tensor product of Pauli operators $\left\{\sigma _{X} ,\sigma _{Y} ,\sigma _{Z} \right\}$. \\ \hline 
${\left| X \right\rangle} $ & An $n$-length input quantum system. \\ \hline 
$X$ & Classical representation of ${\left| X \right\rangle} $. \\ \hline 
${\left| Y \right\rangle} $ & An $n$-length output quantum system, ${\left| Y \right\rangle} =U (\vec{\theta }){\left| X \right\rangle} $. \\ \hline 
$Y$ & An $n$-dimensional output vector. \\ \hline 
$L_{0} $ & $L_{0} $-norm, the number of nonzero elements of $x$, $L_{0} \left(x\right)=\sum _{i=1}^{n}1_{x_{i} \ne 0}  $, or $L_{0} \left(x\right)=\mathop{\lim }\limits_{p\to 0} \sqrt[{p}]{\sum _{i=1}^{n}x_{i}^{p}  } $.   \\ \hline 
$L_{1} $ & $L_{1} $-norm, vector norm, $L_{1} \left(x\right)=\sum _{i=1}^{n}\left|x_{i} \right| $. \\ \hline 
$L_{2} $ & $L_{2} $-norm, Euclidean norm, $L_{2} \left(x\right)=\sqrt{\sum _{i=1}^{n}\left|x_{i} \right|^{2}  } $. \\ \hline 
$\ell ^{2} $ & The $\ell ^{2} $-norm of a quantum system, $\ell ^{2} \left({\left| \psi  \right\rangle} \right)=\sqrt{\sum _{x}\left|\psi \left(x\right)\right|^{2}  } =1$, where $\left|\psi \left(x\right)\right|^{2} =\Pr \left(x\right)$, and  $\sqrt{\int \Pr \left(x\right)dx } =\sqrt{\int \left|\psi \left(x\right)\right|^{2} dx } =1$. \\ \hline 
$U(\vec{\theta '})$ & An actual setting of the unitaries of $QG$ at a particular computational basis $B$, to provide output ${\left| G \right\rangle} =U(\vec{\theta '}){\left| S \right\rangle} $, such that $U(\vec{\theta '}){\left| S \right\rangle} =U(\vec{\theta }){\left| X \right\rangle} $. \\ \hline 
$\vec{\theta '}$ & $L$-dimensional vector of the gate parameters of $U(\vec{\theta '})$. \\ \hline 
$B$  & Computational basis, selected such that $L_{0} \left(S\right)=K$, $K{\rm \ll }n$, holds for the $L_{0} $-norm of $S$, where $S$ is a classical representation of ${\left| S \right\rangle} $. \\ \hline 
$U_{B} $ & Unitary that sets computational basis $B$ as $U_{B} {\left| X \right\rangle} ={\left| S \right\rangle} $.  \\ \hline 
${\rm {\mathcal P}}$ & Post processing unit to perform an $L_{1} $-minimization, and post-processing calculations. \\ \hline 
$C\left(z^{*} \right)$ & Optimal estimate of a particular objective function $C$ fed into the quantum circuit, $C\left(z^{*} \right)=\mathop{\max }\limits_{\forall m} C\left(z_{m} \right)$, where $C\left(z_{m} \right)$ is the estimate yielded in an $m$-th measurement round, $m=1,\ldots ,R_{0} $, while $z_{m} $ is the output string yielded in the $m$-th round.   \\ \hline 
$\Pr _{R_{0} } \left(z^{*} \right)$ & Probability of finding the global optimal output $z^{*} $ via $R_{0} $ standard measurement rounds.  \\ \hline 
$\Pr _{R} \left(z^{*} \right)$ & Probability of finding the global optimal $z^{*} $ via $R$ dense measurement rounds. \\ \hline 
$\Pr _{R_{0} } C\left(z^{*} \right)$ & Probability of finding the global optimal $C\left(z^{*} \right)$ via $R_{0} $ standard measurement rounds. \\ \hline 
$\Pr _{R} C\left(z^{*} \right)$ & Probability of finding the global optimal $C\left(z^{*} \right)$ via $R$ dense measurement rounds. \\ \hline 
$K$ & Constant, $L_{0} \left(S\right)\le K$, $K{\rm \ll }n$. \\ \hline 
$\varepsilon $ & Probability, $\varepsilon =\Pr \left(\delta _{K} \ge \chi \right)$, where $\delta _{K} $ is a constant. \\ \hline 
$M_{r}^{\left(m\right)} $ & Measurement operator of the $m$-th, $m=1,\ldots ,R$, dense measurement round $M_{r}^{\left(m\right)} =\left(b_{1}^{\left(m\right)} M_{B} ,\ldots ,b_{n}^{\left(m\right)} M_{B} \right)^{T} $. \\ \hline 
${\rm {\mathcal M}}$ & Measurement matrix formulated via $R$ dense measurement rounds, ${\rm {\mathcal M}}=\left(M_{r}^{\left(1\right)} ,\ldots ,M_{r}^{\left(R\right)} \right)$. \\ \hline 
${\rm {\mathcal Q}}$ & Matrix, ${\rm {\mathcal Q}}={\rm {\mathcal M}}U(\vec{\theta '})$. \\ \hline 
${\left| S \right\rangle} $ & Computational basis quantum state, ${\left| S \right\rangle} =U_{B} {\left| X \right\rangle} $. \\ \hline 
${\left| G \right\rangle} $ & An output quantum system, ${\left| G \right\rangle} =U(\vec{\theta '}){\left| S \right\rangle} $. \\ \hline 
$b_{i} $ & A random variable, \newline $b_{i} =\left\{\begin{array}{c} {0,{\rm \; with\; Pr}\left(0\right)=0.5} \\ {1,{\rm \; with\; Pr}\left(1\right)=0.5} \end{array}\right. ,$ \newline associated with the measurement of the $i$-th quantum system. \\ \hline 
$Y$ & An $n$-bit length output vector. \\ \hline 
$M'_{r} $ & A measurement operator, $M'_{r} =M_{r} U(\vec{\theta '})$. \\ \hline 
$\beta _{C} $ & An $n$-length vector, $\beta _{C} =\left(b_{1} ,\ldots ,b_{n} \right)^{T} $. \\ \hline 
$\beta '_{C} $ & An $n$-length vector, $\beta '_{C} =\beta _{C} U(\vec{\theta '})$. \\ \hline 
$\Lambda $ & A parameter, $\Lambda =U(\vec{\theta '})S$. \\ \hline 
$\tilde{S}$ & Recovered computational basis vector $S$ from $Y=\beta '_{C} S$  via ${\rm {\mathcal P}}$ . \\ \hline 
$\tilde{\Lambda }$ & An optimal value of $\Lambda $ evaluated from $\tilde{S}$, $\tilde{\Lambda }=U(\vec{\theta '})\tilde{S}$. \\ \hline 
$Y^{R} $ & Measurement output matrix of $R$ measurement rounds, $Y^{R} ={\rm {\mathcal Q}}{\left| S \right\rangle} =\left(Y^{\left(1\right)} ,\ldots ,Y^{\left(R\right)} \right)$, where $Y^{\left(m\right)} $ is the measurement result vector of the $m$-th round. \\ \hline 
$\xi $ & Error probability of finding $z^{*} $ at the end of the $R$ rounds, $\Pr \left(z\ne z^{*} \right)=\xi $. \\ \hline 
$C_{1} ,C_{2} $ & Sub-Gaussian parameters, $C_{1} ,C_{2} >0$. \\ \hline 
$\delta _{K} $ & $K$-th restricted isometry constant. \\ \hline 
$\left[n\right]$ & Set of natural numbers not exceeding $n$, $\left[n\right]=\left\{1,\ldots ,n\right\}$. \\ \hline 
$\Upsilon $ & A subset. \\ \hline 
$H$ & Hermitian matrix. \\ \hline 
$\Omega $, $\kappa $, $\chi $ & Parameters of the dense measurement procedure. \\ \hline 
$V,W,D$ & Parameters of the dense measurement procedure. \\ \hline 
${\rm {\mathcal B}}_{\Upsilon } $ & Unit ball. \\ \hline 
$\Gamma $ & A finite subset of ${\rm {\mathcal B}}_{\Upsilon } $. \\ \hline 
$\delta _{K} <\chi $ & Event associated with probability $\Pr \left(\delta _{K} <\chi \right)=1-\varepsilon $, where $\delta _{K} $ is the $K$-th restricted isometry constant. \\ \hline 
$U(\vec{\theta }_{q,k})$ & A $q$-th element of the $k$-th column of $U(\vec{\theta })$. \\ \hline 
$Z$ & Constant, $Z\ge \sqrt{n} \mathop{\max }\limits_{k,q\in \left[n\right]} \left|U(\vec{\theta }_{q,k})\right|$. \\ \hline 
$\alpha $ & Constant, $\alpha >0$. \\ \hline 
$u_{k} $ & A $k$-th column  of $U(\vec{\theta })$, $k=1,\ldots ,n$. \\ \hline 
$v_{k} $ & A normalized $k$-th column of $U(\vec{\theta })$, $v_{k} =\sqrt{n} u_{k} $, $k=1,\ldots ,n$. \\ \hline 
$\varphi _{kl} $ & Inner product of two normalized columns $v_{k} $ and $v_{l} $, as $\varphi _{kl} =\left\langle {\textstyle\frac{1}{\sqrt{n} }} v_{k} ,{\textstyle\frac{1}{\sqrt{n} }} v_{l} \right\rangle =\left\langle u_{k} ,u_{l} \right\rangle $. \\ \hline 
$u_{i,j} $ & A unitary $u_{i,j} =U(\vec{\theta }_{j,i})$, $j$-th element of the $i$-th column of $U(\vec{\theta })$.  \\ \hline 
$v_{i,j} $ & A normalization of $u_{i,j} $, $v_{i,j} =\sqrt{n} U(\vec{\theta }_{j,i})$. \\ \hline 
$u'_{k} $ & A $k$-th column of $U(\vec{\theta '})$. \\ \hline 
$b_{q} $ & A $q$-th column of  $U_{B} $. \\ \hline 
${\rm {\mathcal P}}_{Q_{R} } $ & Projector, selects a subset of $U(\vec{\theta })$ in the $R$ rounds, where $Q_{R} \subset \left[n\right]$ is a subset of $R$ elements selected uniform at random from all subsets of $\left[n\right]$ of cardinality $R$, $\left|Q_{R} \right|=R$. \\ \hline 
$\xi ^{*} $ & Error probability associated with the selection of rows uniformly and independently at random from $U(\vec{\theta })$. \\ \hline 
$Q_{R} $ & A subset of $R$ elements selected uniform at random from all subsets of $\left[n\right]$ of cardinality $R$, $Q_{R} \subset \left[n\right]$, $\left|Q_{R} \right|=R$. \\ \hline 
$Q'_{R} $ & A subset of $R$ elements,  elements are selected independently and uniformly at random from $\left[n\right]$, $Q'_{R} \subset \left[n\right]$, $\left|Q'_{R} \right|=R$. \\ \hline 
$Q_{k} $ & A subset of $k\le R$ selected uniform at random from all subsets of $\left[n\right]$ of cardinality $k$, $Q_{k} \subset \left[n\right]$, $\left|Q_{k} \right|=k$. \\ \hline 
${\rm {\mathcal E}}\left(Q\right)$ & Event that the $L_{1} $-minimization in ${\rm {\mathcal P}}$ fails. \\ \hline 
${\rm {\mathcal D}}\left(\cdot \right)$ & A distribution. \\ \hline 
$\Lambda ^{*} $ & An optimal $\tilde{\Lambda }$ determined via ${\rm {\mathcal P}}$. \\ \hline 
$\delta _{2K} $ & $2K$-th restricted isometry constant. \\ \hline
\end{longtable}
\end{center}

\begin{thebibliography}{10}
\bibitem{refpr} Preskill, J. Quantum Computing in the NISQ era and beyond, \textit{Quantum} 2, 79 (2018).

\bibitem{refha} Harrow, A. W. and Montanaro, A. Quantum Computational Supremacy, \textit{Nature}, vol 549, pages 203-209 (2017).

\bibitem{aar} Aaronson, S. and Chen, L. Complexity-theoretic foundations of quantum supremacy experiments. \textit{Proceedings of the 32nd Computational Complexity Conference}, CCC '17, pages 22:1-22:67, (2017).

\bibitem{ref1} Debnath, S. et al. Demonstration of a small programmable quantum computer with atomic qubits. \textit{Nature} 536, 63-66 (2016). 

\bibitem{ref2} Barends, R. et al. Superconducting quantum circuits at the surface code threshold for fault tolerance. \textit{Nature} 508, 500-503 (2014).

\bibitem{ref3} Ofek, N. et al. Extending the lifetime of a quantum bit with error correction in superconducting circuits. \textit{Nature} 536, 441-445 (2016).

\bibitem{ref4} Kielpinski, D., Monroe, C. and Wineland, D. J. Architecture for a large-scale ion-trap quantum computer. \textit{Nature} 417, 709-711 (2002).

\bibitem{ref5} Biamonte, J. et al. Quantum Machine Learning. \textit{Nature}, 549, 195-202 (2017).
 
\bibitem{ref6} LeCun, Y., Bengio, Y. and Hinton, G. Deep Learning. \textit{Nature} 521, 436-444 (2014).

\bibitem{ref7} Monz, T. et al. Realization of a scalable Shor algorithm. \textit{Science} 351, 1068-1070 (2016).

\bibitem{ref8} Goodfellow, I., Bengio, Y. and Courville, A. \textit{Deep Learning}. MIT Press. Cambridge, MA, 2016.

\bibitem{ref9} Farhi, E., Goldstone, J. and Gutmann, S. A Quantum Approximate Optimization Algorithm. \textit{arXiv:1411.4028. }(2014).

\bibitem{ref10} Farhi, E., Goldstone, J., Gutmann, S. and Neven, H. Quantum Algorithms for Fixed Qubit Architectures. \textit{arXiv:1703.06199v1} (2017).

\bibitem{ref11} Farhi, E. and Neven, H. Classification with Quantum Neural Networks on Near Term Processors, \textit{arXiv:1802.06002v1} (2018).

\bibitem{ref12} Farhi, E., Goldstone, J. and Gutmann, S. A Quantum Approximate Optimization Algorithm Applied to a Bounded Occurrence Constraint Problem. \textit{arXiv:1412.6062}. (2014).

\bibitem{su} Farhi, E. and Harrow, A. W. Quantum Supremacy through the Quantum Approximate Optimization Algorithm, \textit{arXiv:1602.07674} (2016).

\bibitem{sat} Farhi, E., Kimmel, S. and Temme, K. A Quantum Version of Schoning's Algorithm Applied to Quantum 2-SAT, \textit{arXiv:1603.06985} (2016).

\bibitem{adi1} Farhi, E., Goldstone, J., Gutmann, S. and Sipser, M. Quantum computation by adiabatic evolution. \textit{Technical Report MIT-CTP-2936}, MIT, arXiv:quant-ph/0001106 (2000).

\bibitem{adi2} Kadowaki, T. and Nishimori, H. Quantum annealing in the transverse Ising model. \textit{Phys. Rev.E}, 58:5355-5363, arXiv:cond-mat/9804280 (1998).

\bibitem{refibm} IBM. \textit{A new way of thinking: The IBM quantum experience}. URL: http://www.research.ibm.com/quantum. (2017).

\bibitem{sch} Schoning, T. A probabilistic algorithm for $k$-SAT and constraint satisfaction problems. \textit{Foundations of Computer Science}, 1999. 40th Annual Symposium on, pages 410–414. IEEE (1999).

\bibitem{ref13} Rebentrost, P., Mohseni, M. and Lloyd, S. Quantum Support Vector Machine for Big Data Classification. \textit{Phys. Rev. Lett.} 113. (2014).

\bibitem{ref14} Lloyd, S. The Universe as Quantum Computer, \textit{A Computable Universe: Understanding and exploring Nature as computation}, H. Zenil ed., World Scientific, Singapore, 2012, \textit{arXiv:1312.4455v1} (2013).

\bibitem{ref15} Lloyd, S., Mohseni, M. and Rebentrost, P. Quantum algorithms for supervised and unsupervised machine learning, \textit{arXiv:1307.0411v2} (2013).

\bibitem{ref16} Lloyd, S., Garnerone, S. and Zanardi, P. Quantum algorithms for topological and geometric analysis of data. \textit{Nat. Commun}., 7, arXiv:1408.3106 (2016).

\bibitem{ref17} Lloyd, S., Shapiro, J. H., Wong, F. N. C., Kumar, P., Shahriar, S. M. and Yuen, H. P. Infrastructure for the quantum Internet. \textit{ACM SIGCOMM Computer Communication Review}, 34, 9-20 (2004).

\bibitem{ref18} Lloyd, S., Mohseni, M. and Rebentrost, P. Quantum principal component analysis. \textit{Nature Physics}, 10, 631 (2014).

\bibitem{ref19} Gyongyosi, L., Imre, S. and Nguyen, H. V. A Survey on Quantum Channel Capacities, \textit{IEEE Communications Surveys and Tutorials} \textbf{99}, 1, DOI: 10.1109/COMST.2017.2786748 (2018).

\bibitem{refsur} Gyongyosi, L. and Imre, S. A Survey on Quantum Computing Technology, \textit{Computer Science Review}, Elsevier, DOI: 10.1016/j.cosrev.2018.11.002, ISSN: 1574-0137, (2018).

\bibitem{m1} Neumann, J. \textit{Mathematical Foundations of Quantum Mechanics} (New ed.). ISBN 9781400889921 (2018).

\bibitem{m2} Wheeler, J. A. and Zurek, W. H. (eds.) \textit{Quantum Theory and Measurement}, Princeton University Press. ISBN 978-0-691-08316-2 (1983).

\bibitem{m3} Braginsky, V. B. and Khalili, F. Y. \textit{Quantum Measurement}, Cambridge University Press. ISBN 978-0-521-41928-4 (1992).

\bibitem{m3b} Jacobs, K. \textit{Quantum Measurement Theory and its Applications}, ISBN-10: 1107025486, ISBN-13: 978-1107025486, Cambridge University Press (2014).

\bibitem{m4} Zurek, W. H. Decoherence, einselection, and the quantum origins of the classical, \textit{Reviews of Modern Physics} 75, 715 (2003). 

\bibitem{m5} Greenstein, G. S. and Zajonc, A. G. \textit{The Quantum Challenge: Modern Research On The Foundations Of Quantum Mechanics} (2nd ed.). ISBN 978-0763724702 (2006).

\bibitem{m6} Jaeger, G. Quantum randomness and unpredictability, \textit{Philosophical Transactions of the Royal Society of London A}, DOI: 10.1002/prop.201600053 (2016)

\bibitem{m7} Jabs, A. A conjecture concerning determinism, reduction, and measurement in quantum mechanics, \textit{Quantum Stud.: Math. Found.} 3:279-292, DOI 10.1007/s40509-016-0077-7 (2016).

\bibitem{m8} Helstrom, C. W. \textit{Quantum Detection and Estimation Theory}. Academic Press, Inc. ISBN 0123400503 (1976).

\bibitem{m9} Barnett, S. M., Pegg, D. T. and Jeffers, J. Bayes' theorem and quantum retrodiction, \textit{J. Mod. Opt.} 47, 1779 (2000).

\bibitem{m10} Amri, T. Quantum behavior of measurement apparatus, \textit{arXiv:1001.3032} (2010).

\bibitem{p1} Preskill, J. Lecture Notes for Physics: Quantum Information and Computation, web:  http://www.theory.caltech.edu/people/preskill/ph229\#lecture

\bibitem{p2} Davies, E. B. \textit{Quantum Theory of Open Systems}, Academic Press (1976).

\bibitem{p3} Holevo, A. S. \textit{Probabilistic and statistical aspects of quantum theory}, North-Holland Publ. Cy., Amsterdam (1982).

\bibitem{p4} Kraus, K. States, Effects, and Operations, \textit{Lecture Notes in Physics 190}, Springer (1983).

\bibitem{p5} Nielsen, M. and Chuang, I. \textit{Quantum Computation and Quantum Information}, Cambridge University Press, (2010).

\bibitem{n1} Gelfand, I. M. and Neumark, M. A. On the embedding of normed rings into the ring of operators in Hilbert space, \textit{Rec. Math.} N.S. 12(54) 197–213 (1943).

\bibitem{n2} Peres, A. Neumark’s theorem and quantum inseparability. \textit{Foundations of Physics}, 12:1441–1453, (1990).

\bibitem{n3} Peres, A. \textit{Quantum Theory: Concepts and Methods}, Kluwer Academic Publishers (1993).

\bibitem{u1} Ivanovic, I. D., How to differentiate between non-orthogonal states, \textit{Phys. Lett. A} 123, 257-259 (1987).

\bibitem{u2} Dieks, D. Overlap and distinguishability of quantum states, \textit{Phys. Lett. A} 126 303 (1988).

\bibitem{u3} Peres, A. How to differentiate between non-orthogonal states, \textit{Phys. Lett. A} 128 19 (1988).

\bibitem{u4} Chefles, A. Quantum State Discrimination, \textit{Contemp. Phys.} 41, 401 (2000).
 
\bibitem{u5} Bergou, J. A., Herzog, U. and Hillery, M. Discrimination of Quantum States, \textit{Lect. Notes Phys.} 649, 417–465 (2004).

\bibitem{ref20} Donoho, D. Compressed Sensing, \textit{IEEE Trans. Info. Theory}, vol. 52, no. 4, pp. 1289–1306 (2006).

\bibitem{ref21} Candes, E. J. and Tao, T. Near optimal signal recovery from random projections: Universal encoding strategies?,  \textit{IEEE Trans. Info. Theory}, vol. 52, no. 12, pp. 5406–5425 (2006).

\bibitem{ref22} Candes, E. J., Romberg, J. and Tao, T. Robust Uncertainty Principles: Exact Signal Reconstruction from Highly Incomplete Frequency Information, \textit{IEEE Trans. Info. Theory}, vol. 52, no. 2, pp. 489–509 (2006).

\bibitem{ref23} Foucart, S. and Rauhut, H. \textit{A Mathematical Introduction to Compressive Sensing}, Springer, ISBN 978-0-8176-4947-0 ISBN 978-0-8176-4948-7, DOI 10.1007/978-0-8176-4948-7 (2013).

\bibitem{refcs1} Eldar, Y. C. and Kutyniok, G. \textit{Compressed sensing: theory and applications}. Cambridge University Press (2012).

\bibitem{refcs2} Candes, E. J. Compressive sampling, in \textit{Int. Congress of Mathematicians}, Madrid, Spain, vol. 3, pp. 1433–1452 (2006).

\bibitem{refcs3} Baraniuk, R. G. Compressive sensing, \textit{IEEE Signal Proc. Mag.}, vol. 24, no. 4, pp. 118–120, 124 (2007).

\bibitem{refcs4} Candes, E. J. and Wakin, M. B. An introduction to compressive sampling, \textit{IEEE Signal Proc. Mag.}, vol. 25, no. 2, pp. 21–30 (2008).

\bibitem{refcs5} Duarte, M. F. and Eldar, Y. C. Structured Compressed Sensing: From Theory to Applications, \textit{IEEE Transactions on Signal Processing}, DOI: 10.1109/TSP.2011.2161982 (2011).

\bibitem{refcs16} Baraniuk, R. G., Davenport, M., DeVore, R. and Wakin, M. B. A simple proof of the restricted isometry property for random matrices, \textit{Constructive Approximation}, vol. 28, no. 3, pp. 253–263, (2008).

\bibitem{refcs17} Candes, E. J. The Restricted Isometry Property and Its Implications for Compressed Sensing, \textit{Acad. Sci. Paris, Ser.} I 346 (2008).

\bibitem{refcs18} Bajwa, W. U., Sayeed, A. and Nowak, R. A restricted isometry property for structurally subsampled unitary matrices, in \textit{Allerton Conf. Communication, Control, and Computing}, Monticello, IL, pp. 1005–1012 (2009).

\bibitem{refcs19} Wakin, M. B. and Davenport, M. A. Analysis of orthogonal matching pursuit using the restricted isometry property, \textit{IEEE Trans. Info. Theory}, vol. 56, no. 9, pp. 4395–4401, (2010).

\bibitem{refcs20} Rauhut, H., Romberg, J. K. and Tropp, J. A. Restricted isometries for partial random circulant matrices, \textit{Appl. Comput. Harmon. Anal.} (2011).

\bibitem{one1} Boufounos, P. T. and Baraniuk, R. G. 1-bit compressive sensing. In \textit{Proceedings of the 42nd Annual Conference on Information Sciences and Systems} (CISS), pages 16–21. IEEE, (2008).

\bibitem{one2} Gopi, S., Netrapalli, P., Jain, P. and Nori, A. One-bit compressed sensing: Provable support and vector recovery. In \textit{Proceedings of the 30th International Conference on Machine Learning} (ICML), pages 154–162, (2013).

\bibitem{one3} Knudson, K., Saab, R. and Ward, R. One-bit compressive sensing with norm estimation. \textit{arXiv:1404.6853} (2014).

\bibitem{refcs6} Rauhut, H., Schnass, K. and Vandergheynst, P. Compressed sensing and redundant dictionaries. \textit{IEEE Trans. Info. Theory}, 54(5):2210–2219, (2008).

\bibitem{refcs7} Duarte, M. F., Davenport, M. A., Takhar, D., Laska, J. N., Sun, T., Kelly, K. F. and Baraniuk, R. G. Single pixel imaging via compressive sampling, \textit{IEEE Signal Proc. Mag.}, vol. 25, no. 2, pp. 83–91 (2008).

\bibitem{refcs8} Rauhut, H. Circulant and Toeplitz matrices in compressed sensing, \textit{arXiv:0902.4394}, (2009).

\bibitem{refcs9} Blumensath, T. Sampling and reconstructing signals from a union of linear subspaces. \textit{IEEE Trans. Info. Theory}, 57(7):4660–4671, (2011).

\bibitem{refcs10} Fan, Y.-Z., Huang, T. and Zhu, M. Compressed Sensing Based on Random Symmetric Bernoulli Matrix, \textit{arXiv:1212.3799} (2012).

\bibitem{refcs11} Baraniuk, R. G., Foucart, S., Needell, D., Plan, Y. and Wootters, M. Exponential decay of reconstruction error from binary measurements of sparse signals. \textit{arXiv:1407.8246}, (2014).

\bibitem{refcs12} Krahmer, F., Needell, D. and Ward. R. Compressive sensing with redundant dictionaries and structured measurements. \textit{SIAM Journal on Mathematical Analysis}, 47(6):4606– 4629 (2015).

\bibitem{refcs13} Foucart, S. Dictionary-sparse recovery via thresholding-based algorithms. \textit{Journal of Fourier Analysis and Applications}, 22(1):6–19 (2016).

\bibitem{ref24} Van Meter, R. \textit{Quantum Networking}, John Wiley and Sons Ltd, ISBN 1118648927, 9781118648926 (2014).

\bibitem{ref25} Imre, S. and Gyongyosi, L. \textit{Advanced Quantum Communications - An Engineering Approach}. Wiley-IEEE Press (New Jersey, USA), (2012).

\bibitem{ref26} Shor, P. W. Scheme for reducing decoherence in quantum computer memory. \textit{Phys. Rev. A}, 52, R2493-R2496 (1995).

\bibitem{ref27} Petz, D. \textit{Quantum Information Theory and Quantum Statistics}, Springer-Verlag, Heidelberg, (2008).

\bibitem{ref28} Bacsardi, L. On the Way to Quantum-Based Satellite Communication, \textit{IEEE Comm. Mag.} 51:(08) pp. 50-55. (2013). 

\bibitem{ref29} Laurenza, R. and Pirandola, S. General bounds for sender-receiver capacities in multipoint quantum communications, \textit{Phys. Rev. A} 96, 032318 (2017).

\bibitem{ref30} Gyongyosi, L. and Imre, S. Entanglement-Gradient Routing for Quantum Networks, \textit{Sci. Rep.}, Nature, DOI:10.1038/s41598-017-14394-w (2017).

\bibitem{ref31} Gyongyosi, L. and Imre, S. Entanglement Availability Differentiation Service for the Quantum Internet, \textit{Sci. Rep.}, Nature, DOI:10.1038/s41598-018-28801-3 (2018). 

\bibitem{ref32} Gyongyosi, L. and Imre, S. Multilayer Optimization for the Quantum Internet, \textit{Sci. Rep.}, Nature, DOI:10.1038/s41598-018-30957-x (2018). 

\bibitem{ref33} Gyongyosi, L. and Imre, S. Decentralized Base-Graph Routing for the Quantum Internet, \textit{Phys. Rev. A}, American Physical Society, DOI: 10.1103/PhysRevA.98.022310 (2018).

\bibitem{refa3} Brandao, F. G. S. L., Broughton, M., Farhi, E., Gutmann, S. and Neven, H. For Fixed Control Parameters the Quantum Approximate Optimization Algorithm's Objective Function Value Concentrates for Typical Instances, \textit{arXiv:1812.04170} (2018).

\bibitem{refa4} Zhou, L., Wang, S.-T., Choi, S., Pichler, H. and Lukin, M. D. Quantum Approximate Optimization Algorithm: Performance, Mechanism, and Implementation on Near-Term Devices, arXiv:1812.01041 (2018).

\bibitem{refa5} Lechner, W. Quantum Approximate Optimization with Parallelizable Gates, \textit{arXiv:1802.01157v2} (2018).


\end{thebibliography}
\end{document}